\documentclass[lettersize,journal]{IEEEtran}

\usepackage{soul}

\usepackage{amsmath,amsfonts}
\usepackage{algorithmic}
\usepackage{algorithm}
\usepackage{array}

\usepackage[caption=false,font=footnotesize]{subfig}

\usepackage{textcomp}
\usepackage{stfloats}
\usepackage{url}
\usepackage{verbatim}
\usepackage{graphicx}
\usepackage{tikz}
\usepackage{cite}
\usepackage{amsthm}
\usepackage[normalem]{ulem}
\usepackage{xcolor}
\definecolor{revblue}{RGB}{0,70,180}
\definecolor{revred}{RGB}{180,0,0}

\newtheorem{lemma}{Lemma}
\newtheorem{remark}{Remark}
\newtheorem{example}{Example}
\newtheorem{theorem}{Theorem}
\newtheorem{corollary}{Corollary}
\newtheorem{proposition}{Proposition}
\theoremstyle{definition}
\newtheorem{definition}{Definition}

\newcommand{\var}{\mathbb{V}\mathbb{A}\mathbb{R}}
\usepackage[colorinlistoftodos,prependcaption,backgroundcolor=black!5!white,bordercolor=red]{todonotes}
\usepackage{marginnote}

\begin{document}
	
	\title{A Unified Framework for Unbiased Non-Coherent Over-the-Air Computation}
	
	\author{Martin Dahl, Zheng Chen, and Erik G. Larsson
		\thanks{This work was supported in part by ELLIIT, the Swedish Research Council (VR), and the Knut and Alice Wallenberg (KAW) Foundation. Martin Dahl is affiliated with the Wallenberg Autonomous Systems Program (WASP) Graduate School.}
		\thanks{The authors are with the Department of Electrical Engineering (ISY), Linköping University, Linköping, Sweden (e-mail:\{martin.dahl, zheng.chen, erik.g.larsson\}@liu.se).}}

	\maketitle
	
	\begin{abstract}
		Over-the-Air Computation (OAC) enables efficient data aggregation in large-scale distributed systems by exploiting the superposition property of wireless multiple-access channels. 
		In contrast to most existing studies on OAC assuming exact channel state information, we consider non-coherent OAC (NC-OAC) where the channel phase is unknown at the transmitters. 
		A three-step framework for NC-OAC with a mapping between source data and codewords is proposed: 1) Devices encode their data to non-negative codewords; 2) Devices transmit a sequence of symbols with amplitude proportional to their codewords, such that the receiver can estimate the codeword sum. Estimation of the codeword sum is studied under two scenarios of global channel amplitude knowledge: statistical or instantaneous; 3) The estimated codeword sum is decoded to the desired source data sum at the receiver. With the proposed framework, we first study prior work on NC-OAC and map these to the framework.
		Next, we define and compare the two most commonly (often implicitly) used mappings for NC-OAC: the Affine and the Augmented Affine mappings. Under the constraint of unbiased estimation, we show that with uniformly distributed data and standard channel assumptions, the Augmented Affine mapping exhibits an order of magnitude lower estimation variance than the Affine mapping with both statistical and instantaneous channel knowledge. This result is validated by extensive simulations. Finally, we propose and analyze a new mapping, which demonstrates superior performance over the previous two affine mappings.
	\end{abstract}
	
	\begin{IEEEkeywords}
		Over-the-air computation, non-coherent processing, estimation, wireless sensor networks.
	\end{IEEEkeywords}

	\section{Introduction}
	Aggregation of data held by distributed
	devices over wireless networks is increasingly necessary for applications such as sensor
	networks and federated learning (FL) \cite{chen2020wireless}. The aggregation of data can be, for example, to compute a sum, a count or a majority vote. Meanwhile, radio resources are increasingly congested by a
	growing volume of data traffic, making efficient aggregation important. One approach is separating transmissions from devices in time, frequency, or space. The receiver may then decode the data from each device separately and compute the sum of the data. However, in this approach resource consumption will scale unfavorably with the number of devices, which has motivated the development of Over-the-Air Computation (OAC). By leveraging superposition in wireless channels, a weighted sum of the transmitted data naturally forms at the receiver. With appropriate modulation design at the transmitters, this enables computation of sums, or in general any nomographic function \cite{csahin2023survey, chen2023over}. In recent years, OAC has been investigated for FL \cite{yang2020federated, sery2021over}, where local gradients or model updates are periodically aggregated at a parameter server to achieve collaborative learning. FL is explained in more detail in Appendix \ref{sec:FL_mse}.

	Depending on the availability of channel knowledge, multiple variants of OAC have been studied, which can be broadly categorized into coherent and non-coherent OAC (NC-OAC). An important special case of coherent OAC is digital OAC, which is compatible with existing digital communication techniques \cite{razavikia2023channelcomp, yan2025remac, qiao2024massive}. In coherent OAC, the devices process their data and transmit the modulated signals coherently such that their signals arrive with equal phase at the receiver \cite{liu2020over,guo2021over, cao2020optimized}. Additionally, the transmitters and the receiver must compensate for their channel amplitude gains under power constraints and channel estimation errors \cite{chenyilong2023over}.  However, acquiring precise uplink channel state information (CSI) at the devices can be expensive, especially if CSI is required globally as in \cite{chenyilong2023over}. For example, the devices may transmit orthogonal uplink pilots, and receive individual feedback from the receiver on the downlink. Alternatively, if the devices are calibrated for reciprocity, a single downlink pilot is sufficient to reliably estimate the uplink channel gain at all devices. Estimating the uplink channel phase from the downlink pilot is less reliable, as oscillator phase drift, primarily caused by phase noise and carrier frequency offset \cite{tan2018mobile}, can be faster than phase drift from mobility \cite{nissel2022correctly}.
	
	For NC-OAC the channel phase is \textit{not} required; instead, the devices rely on amplitude modulation to compensate for instantaneous or expected channel fading. The power of transmitted symbols from each device adds up at the receiver, enabling estimation of a sum, or detection of a discrete count.
	Hence, compared to coherent OAC, NC-OAC significantly reduces the channel estimation overhead. In virtue of this reduced overhead, NC-OAC was the first variant of OAC to be implemented in practice \cite{kortke2014analog}, and was studied early in the literature \cite{goldenbaum2013harnessing}. However, the existing literature  on NC-OAC lacks a common framework analyzing the various NC-OAC schemes, and their associated optimal mapping design. Previous works either focus on a downstream task such as FL \cite{csahin2023distributed}, or analyze quite specialized modulation schemes \cite{csahin2024over}. \IEEEpubidadjcol
	
	\subsection{Contributions}
	Our contributions can be summarized as follows:
	\begin{itemize}
		\item \textbf{A unified and general framework} to  understand and interpret the processing chain of existing NC-OAC schemes. As explained in Section \ref{sec:ncota} and visualized by Fig. \ref{figtheframework}, the framework consists of three main steps where source data is encoded to codewords, transmitted and then decoded to the source data sum at the receiver.
		\item \textbf{Theoretical analysis} of the performance of the proposed source data sum estimator under two commonly used affine NC-OAC mappings, subject to unbiased estimation. For both mappings, we derive the estimation variance conditional on the source data, as well as the total variance assuming uniformly distributed source data. Furthermore, for a relevant special case, we show that the codeword sum estimator used in most prior work is the maximum-likelihood estimator.
		\item \textbf{A new mapping and insights on optimal mapping design:} A new \textit{Extended Affine} mapping is proposed and analyzed. With this mapping, the data are mapped onto an arbitrary number of dimensions, which is shown to improve the estimation performance compared to the two existing affine mappings. Furthermore, we show that the \textit{Unit Affine mapping} yields the minimum variance of all Affine mappings for uniformly distributed scalar source data, which has not been shown previously. In previous works, these mappings are heuristically chosen without optimality proof. Additionally, the results suggest that combining the Affine and Augmented Affine mappings can help better adapt to the distribution of source data. 
		\item \textbf{Numerical evaluation} of the three NC-OAC mappings with the proposed sum estimator for a various number of antennas, devices and sequence lengths $M,K,L$ showing that our conclusions hold numerically. The framework is also applied to FL, where our conclusions still hold. Finally, simulations with heavy-tailed source data distributions demonstrate a need of mappings robust to outliers.
	\end{itemize}
	
	\subsection{Relation to Prior Work}
	Related prior works commonly exploit the second moment of received symbols for estimating a sum, majority vote or count of source data. The mappings in these works are almost exclusively equivalent to either the \textit{Affine} or the \textit{Augmented Affine} mappings studied in Section \ref{sec:affine_mappings} without recognizing them as common patterns in NC-OAC. Table \ref{tab:mapping_applications} summarizes and maps these works to our framework where possible. In the table  ``A" stands for Affine, and ``AA" for Augmented Affine. The application scenario ``Application" is either distributed (``Dist.") or decentralized (``Decentr."), as well as for FL, a sensor network (``WSN"), a prototype (``Prot.") or distributed consensus (``Dist. Cons."). The modality of the source data ``Source" at the distributed devices is continuous ``Cont." or discrete ``Disc.". The ``Type" denotes if NC-OAC is used for summation, majority voting (MV) or a discrete count. 
	
	For multiple prior works in Table \ref{tab:mapping_applications}, various channel models and modulations are applied to transmit and receive the codewords. However, the underlying principle is the same as in our unified framework. An exception not covered by the framework is \cite{dong2020blind} which does not rely on power modulation or any of the mappings considered herein. Instead, \cite{dong2020blind} employs a gradient-descent-based method that requires substantial local processing. In Section \ref{sec:demonstration_coverage} a few of the prior works \cite{krohn2005collaborative, michelusi2023decentralized, hoque2023chirp, csahin2023over} are explicitly mapped to the framework. 
	
	\begin{table}[h!]
		\caption{Overview of previous work, based on mappings, applications, source data modality and aggregation type.}
		\centering
		\begin{tabular}{|c|c|c|c|c|c|}
			\hline
			\textbf{Year} & \textbf{Ref.} & \textbf{Mapping} & \textbf{Application} & \textbf{Source} &\textbf{Type} \\
			\hline
			2005 & \cite{krohn2005collaborative} & A & Prot. WSN & Disc. & Count \\
			2011 & \cite{goldenbaum2011analyzing} & A & WSN & Cont. & Sum\\
			2012 & \cite{jakimovski2012collective} & A & Prot. WSN & Disc. & Count\\
			2012 & \cite{sigg2012calculation} & A & Prot. WSN & Cont. & Sum \\
			2013 & \cite{goldenbaum2013harnessing} & A & WSN & Cont. & Sum\\
			2013 & \cite{goldenbaum2013robust} & A & WSN & Cont. & Sum \\
			2013 & \cite{goldenbaum2013reliable} & A & WSN & Cont. & Sum \\
			2014 & \cite{kortke2014analog} & A & Prot. WSN & Cont. & Sum \\
			2020 & \cite{gadre2020quick} & AA & Prot. WSN & Cont. & Sum \\
			2020 & \cite{dong2020blind} & Other & WSN & Cont. & Sum \\
			2021 & \cite{molinari2021max} & A & Dist. Cons. & Cont. & Sum\\
			2021 & \cite{frey2021over} & Other & WSN & Cont. & Sum \\
			2022 & \cite{adeli2022multi} & AA & FL & Disc. & MV \\
			2022 & \cite{csahin2022demonstration} & A & Prot. FL & Disc. & MV\\
			2023 & \cite{csahin2023over} & Other & FL & Cont. & Sum \\
			2023 & \cite{csahin2023distributed} & AA & FL & Disc. & MV \\
			2023 & \cite{hoque2023chirp} & AA & FL & Disc. & MV\\
			2023 & \cite{michelusi2023decentralized} & A & Decentr. FL & Cont.& Sum \\
			2023 & \cite{hellstrom2023optimal} & A & WSN & Cont. & Sum \\
			2024 & \cite{csahin2024over} & AA & WSN & Disc. & MV \\
			2024 & \cite{michelusi2024non} & AA & Decentr. FL & Cont. & MV \\
			2024 & \cite{lee2024performance} & A & WSN & Cont. & Sum \\
			2025 & \cite{deng2025robust} & AA & Dist. Cons. & Cont. & Sum \\
			\hline
		\end{tabular}
		\label{tab:mapping_applications}
	\end{table}
		
	\begin{figure*}[h]
		\centering
		\begin{tikzpicture}[node distance=2.5cm, auto, >=latex, thick, scale=1.0]
			\node at (-1,0) {$$};
			\node (xk) at (0,0) {$x_k$};
			\node at (2.5, 1.75) {\text{1. Encode}};
			\node (box1) at (3.5,0) [draw, minimum width=2.0cm, minimum height=0.5cm] {$\mathbf{w}_k=\begin{bmatrix}
					w_{k,1}\\
					\vdots\\
					w_{k,D}
				\end{bmatrix}$};
			\node (D) at (1.25,0.25) {$\mathcal{E}$};
			\node (wkn) at (5.52,0.25) {$w_{k,n}\equiv w_k$};
			\node at (7.75, 1.75) {\text{2. Transmit}};
			\node (box2) at (7.75,0) [draw, minimum width=2.0cm, minimum height=0.5cm] {$\mathbf{t}_k=\sqrt{w_k}\begin{bmatrix}
					s_{k,1}\\
					\vdots\\
					s_{k,L}
				\end{bmatrix}$};

			\draw[->] (xk) -- (2.3,0);
			\draw[->] (4.7,0) -- (6.325,0);
			\draw[->] (9.19,0) -- (11,1);
			\node () at (11.6,1) {$\mathbf{t}_kg_{1,k}$};
			\node at (11, 0.5)  {$\vdots$};
			\draw[->] (9.19,0) -- (11,0);
			\node () at (11.6,0) {$\mathbf{t}_kg_{m,k}$};
			\node at (11, -0.5) {$\vdots$};
			\draw[->] (9.19,0) -- (11,-1);
			\node () at (11.6,-1) {$\mathbf{t}_kg_{M,k}$};
			
			\draw[->] (12.2,0) -- (13,0);
			\draw[->] (13.5,1) -- (13.5,0.55);
			\node at (13.5,1.2) {$\mathbf{n}_m$};
			\node at (13.5, 1.75) {\text{3. Superposition}};
			\node (sum) at (13.5,0) [draw, circle, minimum size=1cm] {+};
			
			\draw [->] (14.5,0.75) -- (14,0.1);
			\node () at (15.2,0.75) {$\mathbf{t}_{k'}g_{m,k'}$};
			\node at (14.5, 0.1) {$\vdots$};
			\draw [->] (14.5,-0.75) -- (14,-0.1);
			\node () at (15.2,-0.75) {$\mathbf{t}_{K}g_{m,K}$};
			
			\draw [->] (13.5,-0.5) -- (13.5,-1.8);
			\node at (14.75,-2.3) {$\eta$};
			\draw[->] (14.5,-2.3) -- (14,-2.3);
			
			\node at (13.8, -1.3) {$\mathbf{r}_m$};
			
			\node (times) at (13.5,-2.3) [draw, circle, minimum size=1cm] {$\times$};
			
			\node at (13.5,-3.55) {\text{4. Receive}};
			
			\node at (12.25, -2.55) {$\widetilde{\mathbf{r}}_m$};
			\draw[->] (13,-2.3) -- (10.25, -2.3);
			\draw[->] (12.0,-1.55) -- (10.35, -2.2);
			\node at (12.25, -1.55) {$\widetilde{\mathbf{r}}_1$};
			\node at (11.75, -1.9) {$\vdots$};
			\draw[->] (12.0,-3.05) -- (10.35, -2.4);
			\node at (12.25, -3.05) {$\widetilde{\mathbf{r}}_M$};
			\node at (11.75, -2.55) {$\vdots$};
			\node at (8.75,-3.55) {\text{5. Estimate Power}};
			
			\node at (8.75, -2.3) {$\widehat{\sigma^2}=\sum_{m=1}^M\frac{\widetilde{\mathbf{r}}_m^H\widetilde{\mathbf{r}}_m}{ML}$};
			\node at (8.75,-2.3) [draw, minimum width=3.0cm, minimum height=1cm] {};
			
			\draw[->] (7.25,-2.3) -- (6.5,-2.3);
			\node at (5.5,-3.55) {\text{6. Estimate Codeword}};
			
			\node at (5.5,-2.3) {$\widehat{w} = \widehat{\sigma^2}-\eta$};
			\node at (5.5,-2.3) [draw, minimum width=2.0cm, minimum height=1cm] {};
			
			\node (box1) at (2.25,-2.3) [draw, minimum width=2.0cm, minimum height=0.5cm] {$\widehat{\mathbf{w}}=\begin{bmatrix}
					\widehat{w_{1}}\\
					\vdots\\
					\widehat{w_{D}}
				\end{bmatrix}$};
			\draw[->] (4.5,-2.3) -- (3.25,-2.3);
			\node at (4.25,-2.11) {$\widehat{w}_d$};
			\draw[->] (4.15,-1.55) -- (3.325,-2.2);
			\node at (4.45,-1.5) {$\widehat{w}_1$};
			\node at (3.85, -2.45) {$\vdots$};
			\node at (3.85, -1.95) {$\vdots$};
			\draw[->] (4.15,-3.05) -- (3.325,-2.4);
			\node at (4.45,-3.05) {$\widehat{w}_D$};
			
			\draw[->] (1.25,-2.3) -- (0.25,-2.3);
			\node at (0.75,-2.05) {$\mathcal{D}$};
			\node at (1.5,-3.55) {\text{7. Decode}};
			\node at (0,-2.3) {$\widehat{x}$};
		\end{tikzpicture}
		\caption{Overview of the non-coherent over-the-air computation framework.}
		\label{figtheframework}
	\end{figure*}

	\section{Problem Formulation}
	Consider a network of $K$ devices indexed by $k\in\{1,2,\ldots,K\}$, each of them  holding a local data sample $x_k\in[x_\text{min}, x_\text{max}]\subset \mathbb{R}$.\footnote{For high-dimensional data, the proposed framework in Fig. \ref{figtheframework} can be repeated separately for each dimension.} The goal is to estimate the sum of data from all devices:\footnote{Other nomographic functions \cite{goldenbaum2011analyzing} can be considered. The sum is chosen for simplicity and clarity.}
	\begin{equation}
		x = \sum_{k=1}^{K}x_k,
	\end{equation}
	at some receiving node, such as an access-point (AP). We consider an unbiased estimator $\widehat{x}$ given $x_k, \forall k$ minimizing its variance under some distribution of $x_k$:
	\begin{equation}\label{eq:objective_covar}
		\begin{split}
			\underset{\widehat{x}}{\text{ min }}&\var(\widehat{x}),\\
			&\text{s.t. }\mathbb{E}\left[\widehat{x}|\mathbf{x}\right] = x,
		\end{split}\tag{P1}
	\end{equation}
	where $\mathbf{x}=[x_1\ldots x_K]^\text{T}\in[x_\text{min},x_\text{max}]^K$. The constraint of unbiased estimation is motivated by the main application scenario of OAC, which is FL. As shown in Appendix \ref{sec:FL_mse}, convergence is guaranteed in mean-square under specific assumptions on the learning objective if the aggregated gradient is unbiased. Nevertheless, the mean-square-error (MSE) of an unbiased estimator is equivalent to its variance, given by Remark \ref{rem:MSE}. 
	
	\begin{remark}\label{rem:MSE}
		If $\mathbb{E}[\widehat{x}|\mathbf{x}]=x$ it follows from first principles that
		\begin{equation}
			\begin{split}
				&\var(\widehat{x}) = \mathbb{M}\mathbb{S}\mathbb{E}(\widehat{x}) + \var(x),\\
				&\text{where }\mathbb{M}\mathbb{S}\mathbb{E}(\widehat{x}) = \mathbb{E}[(\widehat{x} - x)^2],
			\end{split}
		\end{equation}
		under any distribution on $x$ with finite moments.
	\end{remark}
	\begin{proof}Appendix \ref{proof_theorem_total_variance}.
	\end{proof}
	
	\subsection{Channel Model}\label{sec:channel_model}
	Consider a block-fading multiple access channel between the $K$ single-antenna devices and the AP equipped with $M$ antennas.\footnote{The $M$ antennas can also be $M$ sub-carriers, or $M$ coherence blocks in time. In any case, it is a composite vector channel with dimension $M$.} 
	Each device $k$ transmits a length-$L$ sequence of symbols, where $t_{k,l}\in\mathbb{C}$ is the symbol transmitted by device $k$ at time instant $l\in\{1,\dots,L\}$. Each transmitted symbol is subject to an average power constraint
	$\mathbb{E}\left[|t_{k,l}|^2\right]\leq P$. Then at the $l$-th time instant the $m$-th antenna measures
	\begin{equation}\label{eq:channel_model}
		r_{m,l} = \sum_{k=1}^{K}g_{m,k}t_{k,l} + n_{m,l}\in\mathbb{C}, 
	\end{equation}
	where ${g_{m,k}\sim\mathcal{CN}(0,\beta_k)}$ is the channel coefficient between
	device $k$ and antenna $m$, independent across $m,k$ and $n_{m,l}\sim\mathcal{CN}(0,1)$ is additive noise, independent and identically distributed (i.i.d.) across $m,l$. The large-scale fading coefficient $\beta_k$ is assumed to be constant, accounting for path loss and shadowing. Furthermore,  $g_{m,k}$ is independent of  $l$ as the channel coherence time is assumed larger than $L$. For example, this channel model is relevant if orthogonal frequency-division multiplexing is used. The modulation of each transmitted symbol $t_{k,l}$ for estimation of $x$ is explained in the following sections.
	
	\subsection{Recap of Coherent Over-the-Air Computation}
	For coherent OAC, a transmitted symbol $t_{k,l}$ is modulated as follows. Assume each device knows $g_{m,k}$ through a genie, omit the subscript $m$ and consider $M=1$ for notational simplicity. Disregarding any power constraint, if $g_k$ is exactly known at device $k$, channel-inversion precoding can be applied as  $t_{k,l} = \frac{x_k}{g_k}$. Then the AP receives $r_l = \sum_{k=1}^Kx_k + n_l$ giving an unbiased least-squares estimate of $x\in\mathbb{R}$:
	\begin{equation}
		\widehat{x} = \sum_{l=1}^L\frac{\Re[r_l]}{L} = x + \sum_{l=1}^L\frac{\Re[n_l]}{L},
	\end{equation}
	highlighting the importance of knowing the channel phase $\angle g_{k}$. Here only the real dimension is used, but by including the imaginary dimension two codewords can be transmitted with one channel symbol. For this reason coherent OAC breaks down if $\angle g_{k}$ is unknown or inaccurately estimated, motivating non-coherent processing.
	
	\subsection{Non-Coherent Over-the-Air Computation}\label{sec:ncota}
	In contrast to coherent OAC, in NC-OAC the channel phase $\angle g_{m,k}$ is assumed unknown to all devices and the AP,  at most the amplitude gain $|g_{m,k}|$ or expected power gain $\beta_k$ is known. To this end, consider two scenarios of genie-aided global CSI knowledge, known exactly to all devices and the AP:

	\begin{enumerate}
		\item \textbf{Statistical CSI (SC)}: \textit{The expected power gain of the channel, $\beta_k$, is known to all, while the instantaneous CSI $g_{m,k},\forall m,k$ is unknown}.
		\item \textbf{Instantaneous CSI (IC)}: \textit{The amplitude gain of the instantaneous channel, $|g_{m,k}|, \forall k,m$,  is known to all, while the phase $\angle g_{m,k}$ is unknown}.
	\end{enumerate}
	Some prior work \cite{molinari2021max, michelusi2024non} estimate the sum CSI as a part of their NC-OAC scheme, which is either $\sum_{k=1}^K\beta_k$ or $\sum_{k=1}^K |g_{m,k}|$ and follows from knowing $\beta_k$ or $|g_{m,k}|$ individually. The power of each received symbol $r_{m,l}$, revealing the relevance of such channel knowledge, is as follows:
	\begin{equation}\label{eq:rml_second_moment}
		\mathbb{E}\left[|r_{m,l}|^2\right] = \sum_{k=1}^K\mathbb{E}\left[|g_{m,k}|^2\right]\mathbb{E}\left[|t_{k,l}|^2\right] + 1,
	\end{equation}
	under the channel model herein. The second moment $\mathbb{E}\left[|r_{m,l}|^2\right]$ yields a weighted sum which is the core of the NC-OAC framework, summarized by the flowchart in Fig. \ref{figtheframework} and the following three steps:
	\begin{enumerate}
		\item \textbf{Encode} real-valued source data to non-negative codewords, described in Section \ref{sec:affine_mappings}, corresponding to \textit{1. Encode} in Fig. \ref{figtheframework}.
		\item \textbf{Estimate} the codeword sum from received superimposed signals by modulating $\mathbb{E}\left[|t_{k,l}|^2\right]$ with the non-negative codewords, adding up linearly at the receiver side as in (\ref{eq:rml_second_moment}). This block consists of multiple intermediate steps, including \textit{2. Transmit, 3. Superposition, 4. Receive, 5. Estimate Power, 6. Estimate Codeword} in Fig. \ref{figtheframework}. Section \ref{sec:modulation} and \ref{sec:estimate_w} describe each step in detail. 
		\item \textbf{Decode} the codeword sum to the source data sum, described in Section \ref{sec:affine_mappings} and \ref{sec:estimate_x}, corresponding to step \textit{7. Decode} in Fig. \ref{figtheframework}.
	\end{enumerate}
	
	\section{Encoder and Decoder: Two Affine Mappings}\label{sec:affine_mappings}
	Without channel phase knowledge coherent precoding is impossible; instead, source data $x_k$ must be mapped onto the amplitude of each transmitted symbol $t_{k,l}$. Here multiple such mappings are introduced, encoding data $x_k$ to non-negative\footnote{$\mathbb{R}_{\geq 0}$ denotes the set of non-negative real numbers. } $D$-dimensional codewords $\mathbf{w}_k\in\mathbb{R}_{\geq 0}^D$ modulating each $t_{k,l}$. 
	
	The encoder ${\mathcal{E}: \mathbb{R}\rightarrow \mathbb{R}_{\geq 0}^D}$ at each device is associated to a decoder ${\mathcal{D}:  \mathbb{R}_{\geq 0}^D\rightarrow \mathbb{R}}$ at the AP, under the following constraints:
	
	\begin{equation}
		\begin{split}
			&\mathbf{w}_k = \mathcal{E}(x_k)\in[w_\text{min},w_\text{max}]^D,\\
			&x = \mathcal{D}(\mathbf{w})\in[Kx_\text{min}, Kx_\text{max}],
		\end{split}
	\end{equation}
	where
	\begin{equation}
		\mathbf{w}=\sum_{k=1}^K\mathbf{w}_k\in[Kw_\text{min},Kw_\text{max}]^D.
		\label{codeword_sum}
	\end{equation}
Through this structure, to estimate the sum of source data $x$ one may simply estimate the codeword sum $\mathbf{w}$. 

	\subsection{Affine Mapping (A)}\label{A}
	The \textit{Affine mapping} is an affine operator with $D=1$, mapping a scalar $x_k$ to another scalar $w_k$ by
	\begin{equation}\label{A_encoder}
		w_k = \mathcal{E}(x_k) = ax_k + b,
	\end{equation}
	and
	\begin{equation}\label{A_decoder}
		x = \mathcal{D}(w) = cw + d,
	\end{equation}
	where $a, b, c, d\in\mathbb{R}$. To satisfy (\ref{A_encoder}) and
	(\ref{A_decoder}) it must hold that
	\begin{equation}\label{eq:A_constraints}
		\begin{split}
			ca = 1\text{ and }d = -Kcb.
		\end{split}
	\end{equation}

	\begin{example}[Unit Affine mapping]\label{simple_A}
		As a simple affine mapping satisfying (\ref{eq:A_constraints}) one may choose
		\begin{equation}\label{eq:simple_affine_mapping_coeff}
			\begin{split}
				&a=\frac{1}{x_{\text{max}}-x_{\text{min}}},\text{ }b = -x_{\text{min}}a,\text{ }c = \frac{1}{a},\text{ }d = Kx_{\text{min}},
			\end{split}
		\end{equation}
		giving $w_\text{min}=0,w_\text{max}=1$. 
	\end{example}

	\subsection{Augmented Affine Mapping (AA)}\label{AA}
	The \textit{Augmented Affine mapping} is an affine mapping of a nonlinear transform with $D=2$. First, it augments $x_k$ to $\overline{\mathbf{x}}_k\in\mathbb{R}_{\geq 0}^2$ where $x_k$ is split by sign into two segments:
	\begin{equation}
		\overline{\mathbf{x}}_k = 
		\begin{bmatrix}
			(x_k)^+ \\
			(-x_k)^+ 
		\end{bmatrix},
	\end{equation}
	where $(\cdot)^+=\text{max}(\cdot,0)$. Then 
	\begin{equation}\label{AA_encoder}
		\mathbf{w}_k = \mathcal{E}(x_k) = \mathbf{A}\overline{\mathbf{x}}_k + \mathbf{b},
	\end{equation}
	and
	\begin{equation}\label{AA_decoder}
		x  = \mathcal{D}(\mathbf{w}) = \mathbf{c}^\text{T}\mathbf{w} + d,
	\end{equation}
	where $\mathbf{A}\in\mathbb{R}^{2\times 2}, b\in\mathbb{R}^2, \mathbf{c}\in\mathbb{R}^{2}, d\in\mathbb{R}$. To satisfy (\ref{AA_encoder}) and (\ref{AA_decoder}) it must hold that
	\begin{equation}\label{eq:AA_constraints}
		\begin{split}
			\mathbf{c}^\text{T}\mathbf{A} = [1\text{ }-1] \text{ and }d = -K\mathbf{c}^\text{T}\mathbf{b}.
		\end{split}
	\end{equation}
	\begin{example}[Unit Augmented Affine mapping]\label{simple_AA}
		To satisfy (\ref{eq:AA_constraints}), one may  choose for example
		\begin{equation}\label{eq:simple_augmented_affine_coeff}
			\begin{split}
				&\mathbf{A}=\begin{bmatrix}
					\frac{1}{x_{\text{max}}}&0\\
					0 & \frac{-1}{x_{\text{min}}}
				\end{bmatrix},\text{ }b = 0,\text{ }\mathbf{c} = [x_{\text{max}}\text{ } x_{\text{min}}]^\text{T}, d=0,
			\end{split}
		\end{equation}
		giving $w_\text{min}=0,w_\text{max}=1$. Note that $x_\text{min}<0$ and $x_\text{max}>0$ must hold.
	\end{example}
	
The Augmented Affine mapping distributes the dynamic range $[x_\text{min}, x_\text{max}]$ of $x_k$ over two codewords, compared to the Affine mapping.

 \subsection{Extended Affine Mapping}
 The \emph{Extended Affine} mapping is an enhanced version of the  Unit Augmented Affine mapping. 
 Here the data $x_k$ are split into an even number $N$ equally sized segments with $D=2N$ codewords. The first $N$ encode the continuous value in each segment; the last $N$ indicate which segment $x_k$ is in. However, as will become evident shortly, in practice only $2N-2$ codewords are required. For notional simplicity, consider unit source data $x_k\in[-1,1]=\bigcup_{n=1}^{N}\mathcal{X}_n$ where 
\begin{equation*}
	\mathcal{X}_n=
	\begin{cases}
		\left[\frac{2(n-1)}{N}, \frac{2n}{N}\right] & \text{ for }n\leq \frac{N}{2},\\
		\left[-\frac{2}{N}\left(n-\frac{N}{2}\right), -\frac{2}{N}\left(n-\frac{N}{2}-1\right)\right] & \text{ for }n > \frac{N}{2},
	\end{cases}
\end{equation*}
 such that there is no overlap between each $\mathcal{X}_n$ except at the interval boundaries. The mapping $\mathbf{w}_k=\mathcal{E}(x_k)$ is defined as
\begin{equation*}
	\begin{split}
		\mathcal{E}(x_k) &= [w_{k,1},w_{k,2},\dots,w_{k,N},b_{k,1},\dots,b_{k,N}]^\text{T}\in\mathbb{R}^{2N},
	\end{split}
 \end{equation*}
 where
\begin{equation}\label{eq:AA_N_encoder}
	\begin{split}
		&w_{k,n}=\begin{cases}
			b_{k,n}\left(x_k-\frac{2}{N}(n-1)\right)\frac{N}{2} & \text{ for }n\leq \frac{N}{2},\\
			b_{k,n}\left(-x_k-\frac{2}{N}\left(n-\frac{N}{2}-1\right)\right)\frac{N}{2}& \text{ for }n > \frac{N}{2},
		\end{cases}\\
		&b_{k,n} = \text{I}(x_n\in\mathcal{X}_n)\text{ for }n\in\{1,\dots,N\},
	\end{split}
\end{equation}
and $\text{I}(\cdot)$ is the indicator function. The inverse mapping is
\begin{equation}\label{eq:AA_N_decoder}
	\begin{split}
		&x = \mathcal{D}(\mathbf{w}) = \frac{2}{N}\left(\sum_{n=1}^{N/2}\left(w_{n}+K_n(n-1)\right)\right.\\
		&\left. - \sum_{n'=\frac{N+2}{2}}^{N}\left(w_{n'}+K_{n'}\left(n'-\frac{N}{2}-1\right)\right)\right),
	\end{split}
\end{equation}
 where $w_n=\sum_{k=1}^{K}w_{k,n}, \text{ }K_n=\sum_{k=1}^{K}b_{k,n}$. Since the indicators for $n=1$ and $n=N/2+1$ have no contribution, only $N-2$ counts $K_n$ are required for decoding. Note that the Unit Augmented Affine mapping is a special case of the Extended Affine when $N=2$, ignoring the two redundant indicator codewords.
 
\subsection{Mappings for Majority Voting}\label{sec:majority_voting}
Some previous NC-OAC works rely on detection of a majority vote or count of binary data $x_k\in\{-1,1\}$ \cite{krohn2005collaborative, jakimovski2012collective, adeli2022multi, csahin2022demonstration, csahin2023distributed, hoque2023chirp, csahin2024over}. In particular, our proposed Extended Affine mapping requires counting. A count is the discrete sum $x=\sum_{k=1}^{K}(x_k+1)/2\in\{0,1,\dots,K\}$, which is an important distinction to the continuous sums $x\in[0,K]$ herein, since optimally the decoder should detect a discrete value. Deriving an optimal detector is difficult as argued in Section \ref{sec:optimal_detection_MV_and_C}. As a relaxation, a count or vote can be performed through sum estimation, as herein. Majority voting is the most common special case of counting, which assuming a non-equal outcome can be expressed as
\begin{equation}\label{eq:sign_function_mv}
	x = \text{sign}\left(\sum_{k=1}^{K}x_k\right).
\end{equation}
For NC-OAC this can be formulated as an encoder and decoder similar either to the Unit Affine mapping as
\begin{equation}
	\text{MV-A: }\begin{cases}
		w_k = \mathcal{E}(x_k) = (1+x_k)/2,\\
		x = \mathcal{D}(w) = \text{sign}(w-K/2),
	\end{cases}
\end{equation}
or the Unit Augmented Affine mapping as
\begin{equation}
	\text{MV-AA: }\begin{cases}
		\mathbf{w}_k = \mathcal{E}(x_k) = [(x_k)^+\text{ }(-x_k)^+]^\text{T},\\
		x = \mathcal{D}(\mathbf{w}) = \text{sign}\left([1\text{ }-1]\textbf{w}\right).
	\end{cases}
\end{equation}
Under the effects of noise, the above decoders will never detect a tie, however, if $K$ is large or odd a tie is rare or impossible, respectively. Similarly for counting, one may select the encoder as for voting, and the decoder $\mathcal{D}(\mathbf{w})$ as the affine sum decoders (\ref{A_decoder}), (\ref{AA_encoder}) rounded to the nearest integer in $\{0,1,\dots,K\}$. 
\section{Estimation of the Codeword Sum}\label{sec:modulation}
After each device encodes its data $x_k$ to codewords $\mathbf{w}_k$, the framework proceeds to estimate the sum of the codewords $\mathbf{w}=\sum_{k=1}^K\mathbf{w}_k$ at the AP. Consider estimating each of the $D$ codeword elements in $\mathbf{w}$ individually, denoted by $w$, repeated for all elements in $\mathbf{w}$. The objective formally
    becomes
\begin{equation}\label{eq:weightsumestimateobjective}
	\begin{split}
		\underset{\widehat{w}}{\text{min}}\text{ }&\var\left(\widehat{w}|\textbf{W}\right),\\
		&\text{s.t. }\mathbb{E}\left[\widehat{w}|\textbf{W}\right] = w,\text{ }\mathbb{E}\left[|t_{k,l}|^2|w_k\right] = \rho_kw_k \leq P,
	\end{split}\tag{P2}
\end{equation}
where $\mathbf{W}=[\mathbf{w}_1\ldots\mathbf{w}_K]\in[w_\text{min},w_\text{max}]^{D\times K}$.

\subsection{Modulation for Non-Coherent Transmission and Reception}\label{sec:trans_and_rec}	
As stated in Section \ref{sec:ncota}, NC-OAC relies on amplitude
modulation of each transmitted symbol $t_{k,l}$. Similar to \cite{michelusi2024non, lee2024performance, hellstrom2023optimal, goldenbaum2013robust} we use the following modulation:
\begin{equation}
	t_{k,l} = \sqrt{w_k}s_{k,l},
\end{equation}
where $s_{k,l}=\sqrt{\rho_k}e^{j\phi_{k,l}}$ is the precoding
coefficient of device $k$ at time instant $l$. Here, $\phi_{k,l}\in\mathbb{R}$
is the transmit phase, and $\rho_k\in\mathbb{R}$ the power control coefficient of
device $k$. To ensure uncorrelated symbols over time
$l$, we consider $\phi_{k,l}\sim \text{U}(0,2\pi)$\footnote{The uniform distribution between $a>b\in\mathbb{R}$ is denoted by $\text{U}(b,a)$.}, i.i.d. across $k,l$. This implies that knowledge of $\phi_{k,l}$ is not necessary at the AP, which simplifies implementation significantly and enables using devices with unstable oscillators. Then at antenna $m$ and time instant $l$ the AP receives 
\begin{equation}\label{eq:rml}
	r_{m,l} = \sum_{k=1}^{K}g_{m,k}\sqrt{w_k}s_{k,l} + n_{m,l} \in \mathbb{C}.
\end{equation}
To compensate for a multiplicative bias at the AP, the received signal is amplified, yielding ${\widetilde{r}_{m,l} \equiv \sqrt{\eta} r_{m,l}}$ where $\eta\in\mathbb{R}_{\geq 0}$ is the receive scaling factor. The AP can then estimate $w$ using the codeword sum estimator in Definition \ref{def:codewordsumestimator}:
\begin{definition}[The Codeword Sum Estimator]\label{def:codewordsumestimator}
	The codeword sum estimator is defined as
	\begin{equation}\label{theestimator}
		\widehat{w} = \frac{\widetilde{\mathbf{r}}^H\widetilde{\mathbf{r}}}{ML} - \eta,
	\end{equation}
	where $\widetilde{\mathbf{r}}\in\mathbb{C}^{ML}$ is a vector consisting of all $\widetilde{r}_{m,l}$.
\end{definition}
As this estimator does not guarantee $\widehat{w}\in[Kw_\text{min},Kw_\text{max}]$, an alternative estimator (used indirectly in \cite{michelusi2024non}) is the projected estimator, defined as follows:
\begin{definition}[The Projected Codeword Sum Estimator]\label{def:projectedcodewordsumestimator}
	Let $\widehat{w}$ be as (\ref{theestimator}), then its projected version $\overline{\widehat{w}}$ is
	\begin{equation}\label{eq:projection}
		\overline{\widehat{w}} \equiv \begin{cases}
			Kw_\text{min} & \text{ if }\widehat{w}<Kw_\text{min},\\
			Kw_\text{max} & \text{ if } \widehat{w}>Kw_\text{max},\\
			\widehat{w} & \text{ else}.
		\end{cases}
	\end{equation}
\end{definition}
Compared to (\ref{theestimator}) the projected estimator is biased, and  more difficult to analyze.

Both estimators are motivated by the probability density function (PDF) of the received values $\widetilde{\mathbf{r}}$, particularly for the case of statistical CSI:
\subsubsection{Statistical CSI}\label{sec:statistical_CSI}
With statistical CSI $\beta_k$, we have
\begin{equation}\label{eq:SC_cross_correlation}
	\mathbb{E}\left[\widetilde{r}_{m',l'}^*\widetilde{r}_{m,l}|\mathbf{W}\right] = \begin{cases}
		\sigma^2 & (m',l')=(m,l),\\
		0 & \text{else},
	\end{cases}
\end{equation}
where 
\begin{equation}\label{eq:second_moment_gaussian_r}
	\sigma^2 = \eta\sum_{k=1}^{K} w_{k}\beta_k\rho_k + \eta,
\end{equation}
and
\begin{equation}
\widetilde{r}_{m,l}|\mathbf{W} \sim \mathcal{CN}\left(0, \sigma^2\right).
\end{equation}
Conditional on $\mathbf{W}$ each received symbol $\widetilde{r}_{m,l}$ is marginally Gaussian but only \textit{jointly} Gaussian across $m,l$ for $L=1$, as established by the following lemma. 
\begin{lemma}[Jointly Gaussian given $w$]\label{lemma:jointly_gaussian}
	 Under the channel model and modulation scheme given in Sections \ref{sec:channel_model}, \ref{sec:trans_and_rec} with $L=1$ and statistical CSI, the PDF of $\widetilde{\mathbf{r}}|\mathbf{W}$ is
		\begin{equation}\label{eq:MLPDF}
			p\left(\widetilde{\mathbf{r}}|\mathbf{W}\right) = \frac{1}{\pi^{M}(\sigma^2)^{M}}e^{-\frac{\widetilde{\mathbf{r}}^H\widetilde{\mathbf{r}}}{\sigma^2}}.
		\end{equation}
		Furthermore, if $\eta$ and $\rho_k$ are selected such that $\sigma^2-\eta=w$, then the estimator from Definition \ref{def:projectedcodewordsumestimator} is the maximum-likelihood estimator of $w$. Alternatively, if $w$ is relaxed to $\mathbb{R}$, then the maximum-likelihood estimator of $w$ corresponds to the estimator from Definition \ref{def:codewordsumestimator}.
\end{lemma}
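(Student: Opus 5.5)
With $L=1$ we drop the time index. The plan is to show that, conditional on $\mathbf{W}$, the vector $\widetilde{\mathbf{r}}=[\widetilde r_1,\dots,\widetilde r_M]^\text{T}$ is a circularly-symmetric complex Gaussian vector with covariance $\sigma^2\mathbf{I}_M$. Each entry is
\[
\widetilde r_m=\sqrt{\eta}\Big(\sum_{k=1}^K g_{m,k}\sqrt{w_k\rho_k}\,e^{j\phi_k}+n_m\Big).
\]
Since $g_{m,k}\sim\mathcal{CN}(0,\beta_k)$ is circularly symmetric and independent of the phase $\phi_k$, the product $g_{m,k}e^{j\phi_k}$ is again $\mathcal{CN}(0,\beta_k)$. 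The point needing care is that for $L=1$ the common phase $e^{j\phi_k}$ multiplies the whole vector $[g_{1,k},\dots,g_{M,k}]^\text{T}\sim\mathcal{CN}(\mathbf{0},\beta_k\mathbf{I})$. By unitary invariance of a circularly symmetric Gaussian vector, conditioning on $\phi_k$ leaves this distribution unchanged. Hence $e^{j\phi_k}\mathbf{g}_k\sim\mathcal{CN}(\mathbf{0},\beta_k\mathbf{I})$ and it is independent of $\phi_k$.

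Summing the independent Gaussian vectors over $k$ and adding the noise $\mathbf{n}\sim\mathcal{CN}(\mathbf{0},\mathbf{I})$ gives $\widetilde{\mathbf{r}}|\mathbf{W}\sim\mathcal{CN}(\mathbf{0},\sigma^2\mathbf{I}_M)$, with $\sigma^2$ as in (\ref{eq:second_moment_gaussian_r}). This matches the second moments in (\ref{eq:SC_cross_correlation}). Writing out the standard complex Gaussian density yields (\ref{eq:MLPDF}).

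For the maximum-likelihood claim, set $\sigma^2=w+\eta$ and let $T=\widetilde{\mathbf{r}}^H\widetilde{\mathbf{r}}$. The log-likelihood is
\[
\ell(w)=-M\log\pi-M\log(w+\eta)-\frac{T}{w+\eta}.
\]
Its derivative is
\[
\ell'(w)=\frac{T-M(w+\eta)}{(w+\eta)^2},
\]
which is positive for $w<T/M-\eta$ and negative above it. So $\ell$ is unimodal in $w$, with its stationary point at $w=T/M-\eta$, which is exactly (\ref{theestimator}) with $L=1$.

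For the relaxed domain, the unconstrained maximizer is this stationary point. Strictly, the domain is limited to $w>-\eta$ so that $\sigma^2>0$, and the stationary point satisfies this because $T\ge 0$. For the constrained case $w\in[Kw_\text{min},Kw_\text{max}]$, unimodality gives the maximizer as the stationary point clipped to the interval, which is the projection (\ref{eq:projection}). Thus the projected estimator of Definition \ref{def:projectedcodewordsumestimator} is the ML estimator.

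The main obstacle is the first step: justifying joint Gaussianity despite the shared random phase across antennas. The rotational-invariance argument settles it for $L=1$. For $L>1$ it fails, because the same $\mathbf{g}_k$ is then multiplied by different phases at different time instants, which is consistent with the lemma's restriction to $L=1$.
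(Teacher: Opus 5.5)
Your proposal is correct and follows essentially the same route as the paper: establish $\widetilde{\mathbf{r}}|\mathbf{W}\sim\mathcal{CN}(\mathbf{0},\sigma^2\mathbf{I}_M)$, write the density, then maximize the log-likelihood in $w$, taking the appropriate boundary point when the stationary point lies outside $[Kw_\text{min},Kw_\text{max}]$. Your rotational-invariance argument for the phase shared across antennas, and the explicit unimodality of $\ell(w)$, make rigorous two steps the paper only asserts: the mutual independence of the $\widetilde{r}_m$, and which boundary point maximizes the likelihood. The one small correction is that the stationary point $T/M-\eta$ lies in the open domain $w>-\eta$ only when $T>0$, not for all $T\ge 0$; this holds almost surely, so the conclusion is unaffected.
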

\begin{proof}
	By definition, if any linear combination $\sum_{m}^{M}a_{m}\widetilde{r}_{m},\text{ } a_{m}\in\mathbb{C}$,
	is Gaussian, then $\widetilde{r}_{m}$ are jointly
    Gaussian across $m$, given $\mathbf{W}$
    \cite{edition2002probability}. This is the case here
    since $\widetilde{r}_{m}$ is a sum of independent Gaussian
    variables, and all $\widetilde{r}_{m}$ are mutually independent given $\mathbf{W}$ (note that this is not true for $L>1$). With cross-correlation
    (\ref{eq:SC_cross_correlation}), it follows
    that the joint PDF of the received symbols
    $\widetilde{r}_{m}$ is given by (\ref{eq:MLPDF}). The maximum-likelihood estimator follows from selecting $w$ as $\frac{\delta \text{log}(p\left(\widetilde{\mathbf{r}}|\mathbf{W}\right))}{\delta \overline{w}}=0$ if in the domain of $w$, otherwise as the maximizing boundary point, which yields (\ref{eq:projection}).
\end{proof}
\begin{remark}
In prior works listed by Table \ref{tab:mapping_applications}, the usage of $\widetilde{\mathbf{r}}^H\widetilde{\mathbf{r}}$ to estimate $w$ is generally motivated by the useful structure of its expected value $\sigma^2$ given by (\ref{eq:second_moment_gaussian_r}). Lemma \ref{lemma:jointly_gaussian} connects this heuristic second-moment-based estimator to a maximum-likelihood estimator. 
\end{remark}

\begin{remark}
	Finding $p\left(\widetilde{\mathbf{r}}|\mathbf{W}\right)$ for $L>1$ and random $\phi_{k,l}$ is difficult. However, by letting $\phi_{k,l}=0$, the maximum-likelihood estimator of $w$ can be found from the conditional PDF $p\left(\widetilde{\mathbf{r}}|\mathbf{W}, \{\phi_{k,l}\}\right)$, which is jointly Gaussian. The resulting biased maximum-likelihood estimator of $w$ is $\widehat{w}=\frac{1}{ML^2}\sum_{m=1}^{M}|\widetilde{\mathbf{r}}_m^\text{H}\mathbf{1}|^2-\frac{\eta}{L}$ with projection onto $[Kw_\text{min},Kw_\text{max}]$.
	Through simulations we have observed that this new estimator only has a tiny improvement over (\ref{theestimator}) in terms of $\mathbb{M}\mathbb{S}\mathbb{E}(\widehat{w})$ when $w_k\sim\text{U}[0,1]$ and $\phi_{k,l}=0$. However, when $\phi_{k,l}\sim\text{U}[0,2\pi]$,
	which is the case of practical interest, its performance is significantly worse than (\ref{theestimator}). Because of the lack of significant improvement, and interest of space, the simulations are not included herein.
\end{remark}

\subsubsection{Instantaneous CSI}
When $|g_{m,k}|$ is known, let $\mathbb{E}[.]$ and $\var(\cdot)$ represent the expectation and variance conditional 
on $|g_{m,k}|, \forall m, k$.\footnote{This reduces the notational burden in the paper. However, in the simulations we deviate from this notation by approximating the expectation over the instantaneous channels numerically.} We have 
\begin{equation}
	\begin{split}
		&\underbrace{\mathbb{E}\left[g_{m',k}^*g_{m,k}\right]}_{\text{Conditional on }|g_{m,k}|, \forall m, k}= \begin{cases}
			|g_{m,k}|^2 & m=m', \\
			0 & m\neq m',
		\end{cases}
	\end{split}
\end{equation}
which gives
\begin{equation}\label{eq:IC_crosscorr}
	\mathbb{E}\left[\widetilde{r}_{m',l'}^*\widetilde{r}_{m,l}|\mathbf{W}\right] = \begin{cases}
		\sigma_m^2 & (m',l')=(m,l),\\
		0 & \text{else},
	\end{cases}
\end{equation}
where
\begin{equation}\label{eq:IC_secondmoment}
	\sigma_m^2= \eta\sum_{k=1}^{K} w_{k}|g_{m,k}|^2\rho_k + \eta.
\end{equation}
Note that  $\widetilde{r}_{m,l}|\mathbf{W}$ conditional on the channel amplitude gain $|g_{m,k}|$ is \textit{not} Gaussian. Nevertheless, the codeword sum estimators (\ref{theestimator}) and (\ref{eq:projection}) will be applied in the instantaneous CSI case. 

\subsection{Power Control for Unbiased Estimation}\label{sec:estimate_w}
To solve (\ref{eq:weightsumestimateobjective}), Theorem \ref{theorem_cond_var_w} states that unbiased estimation is \textit{in practice} equivalent to ensuring that all devices are weighted equally at the AP, giving optimal power control $\rho_k$ and $\eta$:

\begin{theorem}[Conditional Variance of $\widehat{w}$]\label{theorem_cond_var_w}
	The codeword sum estimator $\widehat{w}$ defined in (\ref{theestimator}) with 
	\begin{equation}\label{eq:opt_parameters}
		\rho_k = \frac{1}{\eta\widehat{\beta_k}},\text{ }\eta = \frac{w_\text{max}}{P\widehat{\beta_\text{min}}},\text{ } \widehat{\beta_\text{min}}\equiv\underset{k}{\text{ min }}\widehat{\beta_k},
	\end{equation}
	where
	\begin{equation}\label{eq:betahattdef}
		\widehat{\beta_k} \equiv \begin{cases}
			\beta_k & \text{ with statistical CSI},\\
			\sum_{m=1}^M|g_{m,k}|^2/M & \text{ with instantaneous CSI},
		\end{cases}
	\end{equation}
	limited to $M=1$ under the instantaneous CSI scenario, minimizes objective (\ref{eq:weightsumestimateobjective}) with values listed by (\ref{eq:varwgivenw}). Furthermore, the transmitted codewords are weighted equally at the receiver: $\eta w_k\widehat{\beta_k}\rho_k = w_k$, which is sufficient and necessary to fulfill $\mathbb{E}[\widehat{w}|\mathbf{W}]=w$. The conditional variance is
	\begin{equation}\label{eq:varwgivenw}
		\var(\widehat{w}|\mathbf{W}) = 
		\begin{cases}
			\frac{1}{ML}\left(\sigma^4 + (L-1) \sum_{k=1}^Kw_k^2\right) & \text{w. SC},\\
			\frac{1}{L}\left(\sigma^4- \sum_{k=1}^Kw_k^2\right)& \text{w. IC},
		\end{cases}
	\end{equation}
	where $\sigma^4 = \left(w + \eta\right)^2$.
\end{theorem}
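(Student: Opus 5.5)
The plan is to compute the first two conditional moments of $\widehat{w}$ directly from (\ref{eq:rml}), using $\widetilde{r}_{m,l}=\sqrt{\eta}\,r_{m,l}$. The unbiasedness condition will pin down the products $\eta\widehat{\beta_k}\rho_k$. The power constraint, together with monotonicity of the variance in $\eta$, will then pin down $\eta$.

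\textbf{Step 1 (mean and necessity of equal weighting).} By (\ref{eq:SC_cross_correlation}) and (\ref{eq:IC_crosscorr}) we have $\mathbb{E}[\widehat{w}|\mathbf{W}]=\eta\sum_k w_k\widehat{\beta_k}\rho_k$. For SC this is immediate. For IC with $M=1$ it follows because $|g_{1,k}|^2=\widehat{\beta_k}$. Unbiasedness requires this to equal $\sum_k w_k$ for every $\mathbf{W}\in[w_\text{min},w_\text{max}]^{D\times K}$. Since $\mathbf{W}$ ranges over a box with nonempty interior, I will vary a single $w_k$ while holding the others fixed, which forces $\eta\widehat{\beta_k}\rho_k=1$ for every $k$. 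Conversely, this choice clearly gives unbiasedness, so the condition is both necessary and sufficient.

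\textbf{Step 2 (choice of $\eta$).} With $\rho_k=1/(\eta\widehat{\beta_k})$, the power constraint $\rho_kw_k\le P$ for all $w_k\le w_\text{max}$ becomes $\eta\ge w_\text{max}/(P\widehat{\beta_k})$ for all $k$. Hence the constraint is $\eta\ge w_\text{max}/(P\widehat{\beta_\text{min}})$. Once the variance formulas are established, they depend on $\eta$ only through $\sigma^4=(w+\eta)^2$, which is increasing in $\eta\ge 0$. The minimizer of (\ref{eq:weightsumestimateobjective}) within this estimator family is therefore the smallest feasible $\eta$, which is (\ref{eq:opt_parameters}).

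\textbf{Step 3 (variance, SC).} Write $\widetilde{r}_{m,l}=\sum_k \sqrt{\eta w_k\rho_k}\,g_{m,k}e^{j\phi_{k,l}}+\sqrt{\eta}\,n_{m,l}$, and note that after Step 1 each device term has power $w_k$. Conditional on $\mathbf{W}$, the terms for different antennas $m$ are independent. Hence
\[
\var(\widehat{w}|\mathbf{W})=\frac{1}{M^2L^2}\sum_m\Bigl(\sum_l\var(|\widetilde{r}_{m,l}|^2)+\sum_{l\neq l'}\mathrm{Cov}(|\widetilde{r}_{m,l}|^2,|\widetilde{r}_{m,l'}|^2)\Bigr).
\]
Each $\widetilde{r}_{m,l}$ is $\mathcal{CN}(0,\sigma^2)$, so $\var(|\widetilde{r}_{m,l}|^2)=\sigma^4$. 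For $l\neq l'$, I will expand $\mathbb{E}[|\widetilde{r}_{m,l}|^2|\widetilde{r}_{m,l'}|^2]$. The independent uniform phases $\phi_{k,l}$ and $\phi_{k,l'}$, together with the zero-mean noise, kill every cross term except the $\sigma^4$ product and the same-device terms. Those contain $\mathbb{E}|g_{m,k}|^4=2\beta_k^2$, which leaves an excess of $\sum_k(\eta w_k\rho_k\beta_k)^2=\sum_kw_k^2$. Summing gives $M\bigl(L\sigma^4+L(L-1)\sum_kw_k^2\bigr)/(ML)^2$, which is the SC formula.

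\textbf{Step 4 (variance, IC with $M=1$).} Now $|g_k|$ is fixed, and the only randomness besides the noise is the phase $\angle g_k+\phi_{k,l}$, which is uniform and i.i.d. across $l$. Hence the $|\widetilde{r}_l|^2$ are i.i.d. across $l$, and $\var(\widehat{w}|\mathbf{W})=\var(|\widetilde{r}_1|^2)/L$. For a sum of independent uniform-phase terms with fixed amplitudes $c_k^2=w_k$ plus $\mathcal{CN}(0,\eta)$ noise, the fourth moment equals the Gaussian value $2\sigma^4$ minus $\sum_k c_k^4$, since $\mathbb{E}|c e^{j\theta}|^4=c^4$ instead of $2c^4$. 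This gives $\sigma^4-\sum_kw_k^2$.

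I expect the main obstacle to be the careful bookkeeping of the fourth-order moment expansions in Steps 3--4, namely identifying exactly which index pairings survive the phase averaging. I will organize that expansion by pairing indices as in the Isserlis/Wick theorem and then correcting the same-device terms.
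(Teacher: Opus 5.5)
Your proposal is correct and follows essentially the paper's own route: unbiasedness forces $\eta\widehat{\beta_k}\rho_k=1$, and the power constraint then makes $\eta=w_\text{max}/(P\widehat{\beta_\text{min}})$ the smallest feasible, hence variance-minimizing, value. The variances come from fourth-moment bookkeeping, with the IC case reduced to a single symbol via the i.i.d.\ uniform phases $\angle g_k+\phi_{k,l}$. Your Step 3 pairing argument spells out what the paper calls ``tedious but standard calculations,'' and both of your variance formulas check out.

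One correction: for $K\ge 2$ and $L>1$, the per-antenna blocks are \emph{not} independent given $\mathbf{W}$, because all antennas share the transmit phases $\phi_{k,l}$. Your decomposition only needs $\mathrm{Cov}(|\widetilde{r}_{m,l}|^2,|\widetilde{r}_{m',l'}|^2)=0$ for $m\neq m'$. This does hold: $g_{m,k}$ is zero-mean and independent across $m$, so the signal part factors through $\mathbb{E}[g_{m,k}g_{m,k'}^*]\,\mathbb{E}[g_{m',q}g_{m',q'}^*]$. That factor forces $k=k'$ and $q=q'$, which removes all phase factors, so the expectation equals exactly $\sigma^4$.
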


\begin{proof}Appendix \ref{proof_theorem_cond_var_w}.\end{proof}
From (\ref{eq:opt_parameters}) and (\ref{eq:varwgivenw}) one observes how performance is limited by the weakest channel, the maximum transmit power and number of repetitions over time. The variance also grows with $w$, reasonably caused by channel uncertainty in phase. For example, the error in phase can lead to a total elimination of the amplitude at the receiver, or a total alignment, meaning a variance proportional to $w$. Compared to instantaneous CSI, for statistical CSI the variance approaches the non-zero limit $\sum_{k=1}^Kw_k^2 $ as $L\rightarrow\infty$. An explanation is that power control based on statistical CSI has a misalignment error to the instantaneous channel amplitude, which ends up repeated $L$ times. For instantaneous CSI, the instantaneous power control decreases this misalignment error as $-w_k^2$. However, errors from random transmit phases persist, appearing as cross-terms of $w_k$. Finally, for instantaneous CSI,  (\ref{eq:varwgivenw}) is conditional on the channel amplitudes $|g_{m,k}|$, where the minimum value across $k$ is likely to be very small for large $K$, leading to a strong noise amplification.

\subsection{Codeword Sum Estimator With Projection}\label{sec:projected_estimator}
The projected estimator from Definition \ref{def:projectedcodewordsumestimator} is biased as it does not satisfy $\mathbb{E}\left[\overline{\widehat{w}}|\mathbf{W}\right]=w$. This bias depends on $w$ and is unavailable in closed-form even for $M=L=1$. Since the focus herein is on unbiased estimation, this projected estimator will not be investigated in depth. However, in the simplest case of $w=0,M=L=1$ it follows that $p(\eta|r_{1,1}|^2 - \eta >K)=e^{-\left(\frac{K}{\eta}+1\right)}$ giving a bias $\mathbb{E}\left[\overline{\widehat{w}}|\mathbf{W}\right]=Ke^{-\left(\frac{K}{\eta}+1\right)}\in[0,K/e]$. Here it is noteworthy that the bias is limited by $K/e$ (rather than $K$ as may be
intuitively expected). 

\subsection{Optimal Detection for Majority Voting and Counting}\label{sec:optimal_detection_MV_and_C}
The detector minimizing the probability of error for counting is the maximum a posteriori (MAP) detector, which detects the count giving the maximum posterior probability \cite{kay1998fundamentals}. For the Unit Affine mapping with $L=1$, the count posterior is $p(w=c|\mathbf{r})=p(\mathbf{r}|w=c)p(w=c)/p(\mathbf{r})$, where $p(\mathbf{r}|w=c)$ is given by Lemma \ref{lemma:jointly_gaussian} and $p(w=c)$ by the Binomial distribution. The normalizing factor $p(\mathbf{r})$ can be disregarded. Thus for this special case the optimal detector can be expressed in closed-form. Majority voting can also be optimally expressed as a MAP detector through $p(w\geq K/2|\mathbf{r})=p(w=\lceil{K/2}\rceil|\mathbf{r})+p(w=\lceil{K/2}\rceil+1|\mathbf{r})+\dots+p(w=K|\mathbf{r})$. However, showing a closed-form connection to (\ref{eq:sign_function_mv}) is not straightforward even for the Unit Affine mapping with $L=M=P=\beta_k=\eta=\rho_k=1,  K=3$ because of the summation of Gaussian PDFs.

\section{Estimating $x$ from $\hat{w}$}\label{sec:estimate_x}
Finally, given the estimated codeword sum $\widehat{\mathbf{w}}$, we estimate the source data sum $x$ through the decoder mappings introduced in Section \ref{sec:affine_mappings}. As the primary mappings of interest are the Affine and Augmented Affine mappings, which exist in prior work, these will be analyzed separately from the new Extended Affine mapping.

\begin{definition}[Sum Estimator]\label{def:sum_estimator}
	Given $\widehat{\mathbf{w}}$, the \textbf{sum estimator} $\widehat{x}$ is defined as follows:
	\begin{equation}
		\begin{split}
			&\widehat{x} = \mathcal{D}\left(\widehat{\mathbf{w}}\right).
		\end{split}
	\end{equation}
\end{definition}
Let $x_\text{min}=-1, x_\text{max}=1$, which can be shifted and scaled arbitrarily. Furthermore, let $x_+ = \sum_{k=1}^K(x_k)^+$, ${x_-=\sum_{k=1}^{K}(-x_k)^+}$, such that $x=x_+-x_-$. Now, consider the Unit Affine and Augmented Affine mappings from Examples \ref{simple_A} and \ref{simple_AA}. The system has to aggregate one codeword $w$ for the Affine mapping, and two codewords $w_1, w_2$ for the Augmented Affine:
\begin{equation}\label{simplemappings}
	\begin{split}
		&\text{A: }w = \frac{x+K}{2},\text{ AA: }w_1 = x_+,\text{ }w_2 = x_-.
	\end{split}
\end{equation}
By Definition \ref{def:sum_estimator}, the estimators of $x$ are given by:
\begin{equation}\label{xestimator}
	\begin{split}
		&\text{A: }\widehat{x} = 2\widehat{w} - K,\text{ AA: }\widehat{x} = \widehat{w_1} - \widehat{w_2},
	\end{split}
\end{equation}
where $\widehat{w}, \widehat{w_1}, \widehat{w_2}$ are the raw estimates of $w,w_1,w_2$ from (\ref{theestimator}) without the projection in (\ref{eq:projection}) . 

\begin{remark}
	If $\mathbb{E}[\widehat{\mathbf{w}}|\mathbf{W}]=\mathbf{w}$, then (\ref{simplemappings}) and (\ref{xestimator}) give ${\mathbb{E}[\widehat{x}|\mathbf{x}]=x}$, as shown in the proof of Theorem \ref{theorem_total_variance}. 
\end{remark}

\subsection{Conditional Variance of $\widehat{x}$}
To compare $\var(\widehat{x}|\mathbf{x})$ for the Affine and Augmented Affine mapping given by Theorem \ref{theorem_cond_variance}, the Augmented Affine mapping is allocated half the resources $(L/2)$ (i.e., channel uses) per codeword since the number of codewords is twice that of the Affine mapping ($D=2$ compared to $D=1$). 

\begin{theorem}[Conditional Variance of $\widehat{x}$]\label{theorem_cond_variance}
	With $\widehat{w}$ from (\ref{theestimator}), $\rho_k, \eta$ from Theorem \ref{theorem_cond_var_w}, $x_k\in[-1,1], \forall k$, the two affine mappings described in (\ref{simplemappings}), (\ref{xestimator}), the conditional variance of $\widehat{x}$ is given by the expressions in Table \ref{tab:varxgivenx}. The Augmented Affine mapping is given $L/2$ resources per codeword, the Affine is given $L$.
	\begin{table}[!h]
		\begin{center}
			\renewcommand{\arraystretch}{1.5} 
			\setlength{\tabcolsep}{7pt}      
			\caption{Conditional variance of $\widehat{x}$.}
			\label{tab:varxgivenx}
			\begin{tabular}{ | c |c |c  | } 
				\hline
				CSI& Map & $\var(\widehat{x}|\mathbf{x})$  \\ 
				\hline
				SC & A & $\frac{1}{ML}\left(\left(x + K +2\eta\right)^2 + (L-1)\sum_{k=1}^K\left(x_k+1\right)^2\right)$ \\ 
				\hline
				SC & AA & $\frac{2}{ML}\left(\left(x_++\eta\right)^2 + \left(x_- + \eta\right)^2+\frac{L-2}{2}\sum_{k=1}^{K}x_{k}^2\right)$ \\ 
				\hline
				IC & A & $\frac{1}{L}\left(\left(x+K + 2\eta\right)^2 - \sum_{k=1}^K\left(x_k+1\right)^2\right)$ \\ 
				\hline
				IC & AA & $\frac{2}{L}\left(\left(x_+ + \eta\right)^2 + \left(x_- + \eta\right)^2 - \sum_{k=1}^Kx_k^2\right)$ \\ 
				\hline
			\end{tabular}
		\end{center}
	\end{table}
\end{theorem}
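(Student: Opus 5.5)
The plan is to reduce everything to Theorem \ref{theorem_cond_var_w}, which already gives $\var(\widehat{w}|\mathbf{W})$ for a single codeword element as a function of the per-device codewords $w_k$, the codeword sum $w$, the receive scaling $\eta$, and the number of channel uses. Since $\widehat{x}$ is an affine function of the codeword estimates, by (\ref{xestimator}), the conditional variance follows once three things are known: the affine coefficients, the codeword values from (\ref{simplemappings}), and the cross-covariance between codeword estimates. Conditioning on $\mathbf{x}$ is equivalent to conditioning on $\mathbf{W}$, because the encoders are deterministic.

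\textbf{Affine mapping.} Here $\widehat{x}=2\widehat{w}-K$, so $\var(\widehat{x}|\mathbf{x})=4\var(\widehat{w}|\mathbf{W})$.
\begin{itemize}
\item The Unit Affine codewords are $w_k=(x_k+1)/2$, with $w=(x+K)/2$ and $w_\text{max}=1$.
\item Substituting into (\ref{eq:varwgivenw}) gives $4\sigma^4=4(w+\eta)^2=(x+K+2\eta)^2$ and $4\sum_k w_k^2=\sum_k(x_k+1)^2$.
\item This yields the SC and IC rows for A directly, using $L$ channel uses.
\end{itemize}
One subtlety: in the table, $\eta$ is the value from (\ref{eq:opt_parameters}) with $w_\text{max}=1$. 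Both unit mappings have $w_\text{max}=1$, so $\eta$ is the same symbol in both rows.

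\textbf{Augmented Affine mapping.} Here $\widehat{x}=\widehat{w_1}-\widehat{w_2}$, with $w_{k,1}=(x_k)^+$ and $w_{k,2}=(-x_k)^+$.
\begin{itemize}
\item The two codewords are transmitted over disjoint sets of $L/2$ channel uses each, with independent phases $\phi_{k,l}$ and noise.
\item With SC the channel $g_{m,k}$ is shared across both blocks, because it is constant over the coherence time. So I must check whether $\mathrm{Cov}(\widehat{w_1},\widehat{w_2}|\mathbf{W})$ vanishes.
\item The cross-covariance involves terms like $\mathbb{E}[|g_{m,k}|^2|g_{m,k'}|^2]w_{k,1}w_{k',2}$ minus the product of means. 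The excess fourth moment appears only when $k=k'$, and that term carries $w_{k,1}w_{k,2}=(x_k)^+(-x_k)^+=0$. Hence the covariance is zero for every $\mathbf{x}$.
\item Then $\var(\widehat{x}|\mathbf{x})=\var(\widehat{w_1})+\var(\widehat{w_2})$, each evaluated from (\ref{eq:varwgivenw}) with $L\to L/2$.
\end{itemize}
The sums are $w_1=x_+$ and $w_2=x_-$, and $\sum_k w_{k,1}^2+\sum_k w_{k,2}^2=\sum_k x_k^2$ because each $x_k$ contributes to exactly one of the two. With SC this gives $\frac{2}{ML}\big((x_++\eta)^2+(x_-+\eta)^2+(\frac{L}{2}-1)\sum_k x_k^2\big)$, matching the table. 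With IC it gives $\frac{2}{L}\big((x_++\eta)^2+(x_-+\eta)^2-\sum_k x_k^2\big)$.

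\textbf{Main obstacle.} I expect the hard step to be rigorously justifying zero covariance between the two codeword estimates when they share a channel realization. The plan is to expand $\widehat{w_1}\widehat{w_2}$ as sums of $|\widetilde{r}_{m,l}|^2|\widetilde{r}_{m',l'}|^2$ with $l$ and $l'$ in disjoint blocks. The phase and noise averages factor across the blocks, which leaves $\eta^2\sum_{k,k'}\rho_k\rho_{k'}w_{k,1}w_{k',2}\,\mathrm{Cov}(|g_{m,k}|^2,|g_{m',k'}|^2)$. This is nonzero only when $k=k'$ and $m=m'$, where it is killed by the disjoint support of $(x_k)^+$ and $(-x_k)^+$.

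Under IC the channel amplitudes are conditioned on, so the covariance is zero trivially. Under IC I also rely on the $M=1$ restriction inherited from Theorem \ref{theorem_cond_var_w}.
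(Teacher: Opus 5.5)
Your proposal is correct and follows the paper's proof: for the Affine mapping you scale the conditional variance of Theorem~\ref{theorem_cond_var_w} by $c^2=4$ and substitute $w_k=(x_k+1)/2$, and for the Augmented Affine mapping you add the two per-codeword variances, each evaluated with $L/2$ uses. The only difference is in dropping the cross term: the paper asserts that $\widehat{w_1}$ and $\widehat{w_2}$ are independent given $\mathbf{x}$ because each device contributes to only one codeword, whereas you verify the zero covariance explicitly through the shared-channel fourth-moment term and $(x_k)^+(-x_k)^+=0$, which is the same observation made concrete.
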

\begin{proof}Appendix \ref{proof_theorem_cond_variance}.\end{proof}
In Theorem \ref{theorem_cond_variance} the $\var(\widehat{x}|\mathbf{x})$ of the Affine mapping has an explicit quadratic growth in the number of devices $K$. For the Augmented Affine mapping this quadratic growth is implicit through $x$, but appears explicitly in Theorem \ref{theorem_total_variance}. This implies a useful growth in received power proportional to $K$ since measurement noise is not proportional to $K$. The four $\var(\widehat{x}|\mathbf{x})$ are plotted in Fig. \ref{fig:A_vs_AA} and discussed further in Section \ref{sec:sim_cond_variance}.
\subsection{Variance of $\widehat{x}$}
Theorem \ref{theorem_total_variance} gives the total variance of $\widehat{x}$ when $x_k\sim\text{U}(-1,1)$, i.i.d. across $k$ following from the \textit{law of total variance}:
\begin{equation}\label{eq:total_variance}
	\var(\widehat{x}) = \mathbb{E}\left[\var(\widehat{x}|\mathbf{x})\right] + \var(\mathbb{E}[\widehat{x}|\mathbf{x}]).
\end{equation}

To ensure a fair comparison between the total variance of the Affine and Augmented Affine mapping, $x_k$ must be equally likely in either half of the Augmented Affine split, which is the case for $\text{U}(-1,1)$. Furthermore, both mappings should use the same expected power, which according to Proposition \ref{prop:fair_transmission} holds for the Augmented Affine mapping ($N=2$) when $\rho_{k,\text{AA}}=2\rho_k, \eta_\text{AA}=\eta/2$.  

\begin{theorem}[Variance of $\widehat{x}$ under Uniform Data]\label{theorem_total_variance}
	Let $x_k\sim\text{U}(-1,1)$, i.i.d. across $k$ where $\eta_\text{AA}=\eta/2$. Then the total variance of the conditional variances presented in Theorem \ref{theorem_cond_variance} is given by the expressions in Table \ref{tab:vartotx}. Additionally, the Unit Affine mapping (\ref{eq:simple_affine_mapping_coeff}) achieves the minimum $\var(\widehat{x})$ for any valid choice of $a,b,c,d$ in the Affine mapping.
	\begin{table}[!h]
		\begin{center}
			\renewcommand{\arraystretch}{1.5} 
			\setlength{\tabcolsep}{10pt}      
			
			\caption{Total variance of $\widehat{x}$ with uniformly distributed $x_k$.}
			\label{tab:vartotx}
			\begin{tabular}{ | c |c |c  | } 
				\hline
				CSI& Map & $\var(\widehat{x})-\frac{K}{3}=\mathbb{M}\mathbb{S}\mathbb{E}(\widehat{x})$  \\ 
				\hline
				SC & A & $\frac{1}{ML}\left(K^2 -K + 4K\eta + 4\eta^2 + \frac{4KL}{3}\right)$ \\ 
				\hline
				SC & AA & $\frac{1}{ML}\left(\frac{K^2}{4} - \frac{K}{4} + 2K\eta_\text{AA} + 4\eta_\text{AA}^2+\frac{KL}{3}\right)$ \\ 
				\hline
				IC & A & $\frac{1}{L}\left(K^2 - K + 4K\eta + 4\eta^2\right)$ \\ 
				\hline
				IC & AA & $\frac{1}{L}\left(\frac{K^2}{4} - \frac{K}{4}+ 2K\eta_\text{AA} + 4\eta_\text{AA}^2\right)$ \\ 
				\hline
			\end{tabular}
		\end{center}
	\end{table}
\end{theorem}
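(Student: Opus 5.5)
The plan is to apply the law of total variance (\ref{eq:total_variance}) to each row of Table \ref{tab:varxgivenx}. Under the parameters of Theorem \ref{theorem_cond_var_w} we have $\mathbb{E}[\widehat{\mathbf{w}}|\mathbf{W}]=\mathbf{w}$. Since both decoders in (\ref{xestimator}) are affine, this gives $\mathbb{E}[\widehat{x}|\mathbf{x}]=x$. The second term of (\ref{eq:total_variance}) is therefore $\var(x)=K\var(x_k)=K/3$, and by Remark \ref{rem:MSE} the remaining term $\mathbb{E}[\var(\widehat{x}|\mathbf{x})]$ is exactly the MSE column of Table \ref{tab:vartotx}. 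So the table reduces to computing expectations of the entries of Table \ref{tab:varxgivenx} over i.i.d. $x_k\sim\text{U}(-1,1)$. For the Augmented Affine rows, $\eta$ is replaced by $\eta_\text{AA}$, as justified by Proposition \ref{prop:fair_transmission}.

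\textbf{Moments and the Affine rows.} I will use the moments $\mathbb{E}[x_k]=0$, $\mathbb{E}[x_k^2]=1/3$, $\mathbb{E}[(x_k)^+]=1/4$ and $\mathbb{E}[((x_k)^+)^2]=1/6$. From these, $\mathbb{E}[x^2]=K/3$, $\mathbb{E}[(x_k+1)^2]=4/3$, and
$$\mathbb{E}[x_+^2]=K\left(\tfrac{1}{6}-\tfrac{1}{16}\right)+\tfrac{K^2}{16}=\tfrac{5K}{48}+\tfrac{K^2}{16},$$
with the same value for $x_-$ by symmetry. For the Affine rows,
$$\mathbb{E}[(x+K+2\eta)^2]=K^2+K/3+4K\eta+4\eta^2,$$
and the correction term contributes $\pm 4K/3$ (times $L-1$ for SC). Combining, the SC row becomes $K^2-K+4K\eta+4\eta^2+4KL/3$ and the IC row becomes $K^2-K+4K\eta+4\eta^2$, both over the stated prefactor.

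\textbf{Augmented Affine rows.} Here
$$\mathbb{E}[(x_\pm+\eta_\text{AA})^2]=\tfrac{5K}{48}+\tfrac{K^2}{16}+\tfrac{K\eta_\text{AA}}{2}+\eta_\text{AA}^2,$$
and $\mathbb{E}[\sum_k x_k^2]=K/3$. Multiplying by $2/(ML)$ and collecting terms gives $\frac{K^2}{4}-\frac{K}{4}+2K\eta_\text{AA}+4\eta_\text{AA}^2+\frac{KL}{3}$ for SC, and analogously for IC. This part is routine bookkeeping.

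\textbf{Optimality of the Unit Affine mapping.} For a general valid Affine mapping, (\ref{eq:A_constraints}) gives $\widehat{x}=c\widehat{w}+d$ with $c=1/a$, so $\var(\widehat{x}|\mathbf{x})=a^{-2}\var(\widehat{w}|\mathbf{W})$.
\begin{itemize}
\item By the symmetry $x_k\mapsto -x_k$ of the uniform law, I may take $a>0$.
\item Non-negativity of the codewords then forces $t\equiv b/a\geq 1$, and $w_\text{max}=a(1+t)$.
\item Hence $\eta=a(1+t)/(P\widehat{\beta_\text{min}})\equiv a(1+t)\kappa$, and substituting $w_k=a(x_k+t)$ into (\ref{eq:varwgivenw}) makes the scale $a$ cancel. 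For example, in the SC case
$$a^{-2}\sigma^4=\big(x+Kt+(1+t)\kappa\big)^2.$$
\end{itemize}
So $\mathbb{E}[\var(\widehat{x}|\mathbf{x})]$ is a function of $t\geq1$ alone, and it suffices to show this function is nondecreasing on $[1,\infty)$. Its minimum is then at $t=1$, i.e. $b=a$, which is the Unit Affine mapping (\ref{eq:simple_affine_mapping_coeff}) up to the irrelevant scale.

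\textbf{Main obstacle.} I expect the monotonicity step to be the hard part, specifically in the IC case, where $\sum_k(x_k+t)^2$ enters with a negative sign. I will compute the expectation explicitly:
$$f(t)=K^2t^2+K/3+2Kt(1+t)\kappa+(1+t)^2\kappa^2-K(t^2+1/3).$$
Then $f'(t)=2(K^2-K)t+2K\kappa(1+2t)+2(1+t)\kappa^2>0$ for $t\ge1$ and $K\geq1$. The SC case is easier, since all of its terms are nondecreasing in $t$ for $t\geq1$ (using $x_k+t\geq0$).
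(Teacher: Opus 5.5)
Your proposal is correct and follows the paper's proof essentially step for step: the law of total variance with $\mathbb{E}[\widehat{x}|\mathbf{x}]=x$, direct computation of the uniform moments, and, for the optimality claim, showing that the expected conditional variance increases in $b$ on $b\ge a>0$. The minor differences are that you work with the moments of $(x_k)^+$ where the paper writes $x_k=q_k\widetilde{x_k}-(1-q_k)\widetilde{x_k}$, and that your normalization $t=b/a$ and the explicit derivative $f'(t)>0$ in the IC case make precise what the paper only asserts is ``clear''.
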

Note that Table \ref{tab:vartotx} contains $\mathbb{M}\mathbb{S}\mathbb{E}(\widehat{x})$, which is equal to $\var(\widehat{x})$ given an unbiased estimator as stated by Remark \ref{rem:MSE} where $\var(x)=K/3$. The main takeaway from Theorem \ref{theorem_total_variance} is summarized in the following corollary.
\begin{corollary}[Augmented Affine Beats Affine under Uniform Data]\label{corollary_AA_better_than_A}
	For all $\eta>0$, $K\geq 1$ and $L\geq 1$
	\begin{equation*}
		\begin{split}
			&\var(\widehat{x})_{\text{SC,A}} > \var(\widehat{x})_{\text{SC,AA}},\text{ }\var(\widehat{x})_{\text{IC,A}} > \var(\widehat{x})_{\text{IC,AA}},
		\end{split}
	\end{equation*}
	where $\var(\widehat{x})_\text{SC,A}$ means statistical CSI knowledge and the Affine mapping, for example. 
\end{corollary}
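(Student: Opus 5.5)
The plan is to compare the entries of Table \ref{tab:vartotx} directly, using Theorem \ref{theorem_total_variance}. Both total variances share the additive term $\var(x)=K/3$ from Remark \ref{rem:MSE}, so it suffices to compare the $\mathbb{M}\mathbb{S}\mathbb{E}(\widehat{x})$ expressions. Throughout, substitute $\eta_\text{AA}=\eta/2$, which is the fair-power choice used in Theorem \ref{theorem_total_variance}. The comparison is then between two polynomials in $K,\eta,L$ with the common prefactor $1/(ML)$ (SC) or $1/L$ (IC), and that positive prefactor can be dropped.

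\textbf{Instantaneous CSI.} The AA expression becomes
\begin{equation*}
\frac{K^2}{4}-\frac{K}{4}+K\eta+\eta^2 = \frac{1}{4}\left(K^2-K+4K\eta+4\eta^2\right).
\end{equation*}
This is exactly one quarter of the A expression. Hence $\var(\widehat{x})_\text{IC,A}-\var(\widehat{x})_\text{IC,AA}=\frac{3}{4L}\left(K^2-K+4K\eta+4\eta^2\right)$. Because $K\geq1$, we have $K^2-K\geq 0$, and $4K\eta+4\eta^2>0$ whenever $\eta>0$. The difference is therefore strictly positive.

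\textbf{Statistical CSI.} The same substitution gives the difference
\begin{equation*}
\frac{1}{ML}\left(\frac{3}{4}\left(K^2-K+4K\eta+4\eta^2\right)+\frac{4KL}{3}-\frac{KL}{3}\right).
\end{equation*}
The extra term $KL>0$ only adds to an already positive quantity, so strict inequality holds for all $K\geq1$, $L\geq1$, $\eta>0$.

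The only real subtlety is bookkeeping. First, we must confirm that the $\eta$ appearing in both rows is the same parameter, namely the Affine one from Theorem \ref{theorem_cond_var_w} with $w_\text{max}=1$, and that the AA row is written in terms of $\eta_\text{AA}=\eta/2$. If this is misread, for example by taking $\eta_\text{AA}=\eta$, the $\eta$ terms no longer scale by exactly $1/4$. Even in that case, the cross term would be $2K\eta$ versus $4K\eta$ and the squared terms $4\eta^2$ versus $4\eta^2$, so the inequality would still hold strictly thanks to the $K^2$ and $K\eta$ terms. The argument is therefore robust to this point. Beyond this, the proof is elementary algebra, and no step is a genuine obstacle.
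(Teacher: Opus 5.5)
Your proposal is correct and follows the paper's own argument: the paper's proof is just that the inequalities ``follow directly from Theorem \ref{theorem_total_variance},'' and your comparison of the entries of Table \ref{tab:vartotx} with $\eta_\text{AA}=\eta/2$ supplies that algebra correctly. Two small additions: in the SC case the AA $\mathbb{M}\mathbb{S}\mathbb{E}$ is also exactly one quarter of the A one, since $\frac{KL}{3}=\frac{1}{4}\cdot\frac{4KL}{3}$; and your side check that the strict inequalities still hold with $\eta_\text{AA}=\eta$ is right.
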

\begin{proof}
	Follows directly from Theorem \ref{theorem_total_variance}.
\end{proof}

While the coefficient choices in (\ref{eq:simple_augmented_affine_coeff}) are not necessarily optimal for the Augmented Affine mapping, its $\var(\widehat{x})$ is still lower than the variance achieved by the Unit Affine mapping shown by Theorem \ref{theorem_total_variance} to be optimal. Thus, Corollary \ref{corollary_AA_better_than_A} shows that the Augmented Affine is better than the Affine mapping for uniform data. However, as can be shown through Theorem \ref{theorem:generalized_affine_mapping} for the Extended Affine mapping, the Unit Augmented Affine ($N=2$) is not optimal for all $L$:

\begin{theorem}[Variance of the Extended Affine Mapping]\label{theorem:generalized_affine_mapping}
	Let $x_k\sim\text{U}(-1,1)$, i.i.d. across $k$, then the total variance of $\widehat{x}$ using the Extended Affine mapping with encoder (\ref{eq:AA_N_encoder}), decoder (\ref{eq:AA_N_decoder}) and codewords aggregated using the codeword sum estimator (\ref{theestimator}) with statistical CSI is
	\begin{equation*}
		\begin{split}
			&\var(\widehat{x})-\frac{K}{3} = \mathbb{M}\mathbb{S}\mathbb{E}(\widehat{x}) =\\
			&=\frac{4}{N^2ML_w}\bigg(\frac{K(K-1)}{4N} + K\eta_N + N\eta_N^2 + \frac{KL_w}{3}\bigg)+\\
			&\left(\frac{N^2-3N+2}{3NML_b}\right)\left(\frac{K(K-1)}{N^2} + \frac{K(2\eta_N +L_b)}{N} + \eta_N^2\right),
		\end{split}
	\end{equation*}
	where $L=NL_w + (N-2)L_b$. $L_w$ and $L_b$ are positive integers representing the number of repetitions per continuous codeword and indicator codeword, respectively, where $L_b=0$ if $N=2$. Furthermore, $\eta_N$ is from Proposition \ref{prop:fair_transmission}.
\end{theorem}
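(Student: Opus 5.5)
The plan is to reuse the structure of the proofs of Theorems \ref{theorem_cond_variance} and \ref{theorem_total_variance}, applied codeword by codeword. Each of the $2N-2$ nontrivial codeword sums ($N$ continuous sums $w_n$ and $N-2$ counts $K_n$) is estimated separately with the estimator (\ref{theestimator}). The continuous codewords use $L_w$ repetitions and the indicators use $L_b$. The noise in different codeword estimates is independent given $\mathbf{W}$, since they use disjoint channel uses.

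\textbf{Step 1: reduce to a sum of codeword variances.} The decoder (\ref{eq:AA_N_decoder}) is linear in the estimated $w_n$ and $K_n$, with coefficient $\pm 2/N$ on $w_n$ and $\pm\frac{2}{N}(n-1)$ (resp.\ $\pm\frac{2}{N}(n-N/2-1)$) on $K_n$. Theorem \ref{theorem_cond_var_w} gives unbiasedness, so $\mathbb{E}[\widehat{x}|\mathbf{x}]=x$ and Remark \ref{rem:MSE} identifies $\var(\widehat{x})-K/3$ with $\mathbb{E}[\var(\widehat{x}|\mathbf{x})]$. The conditional variance is then a weighted sum:
\begin{equation*}
\var(\widehat{x}|\mathbf{x})=\frac{4}{N^2}\sum_{n}\var(\widehat{w}_n|\mathbf{W})+\frac{4}{N^2}\sum_{n}c_n^2\var(\widehat{K}_n|\mathbf{W}),
\end{equation*}
with $c_n\in\{0,1,\dots,N/2-1\}$, where $c_n=0$ for the two dropped indicators. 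Each term is then given by the SC line of (\ref{eq:varwgivenw}), with $L$ replaced by $L_w$ or $L_b$ and $\eta$ by $\eta_N$.

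\textbf{Step 2: take expectations under uniform data.} For a segment $n$, each device lands in $\mathcal{X}_n$ with probability $1/N$. Given that, $w_{k,n}\sim\text{U}(0,1)$, and indicator codewords equal $b_{k,n}\in\{0,1\}$. For the continuous codewords I need $\mathbb{E}[(w_n+\eta_N)^2]=\mathbb{E}[w_n^2]+2\eta_N\mathbb{E}[w_n]+\eta_N^2$ and $\mathbb{E}[\sum_k w_{k,n}^2]=K/(3N)$, using the following moments:
\begin{itemize}
\item $\mathbb{E}[w_{k,n}]=1/(2N)$ and $\mathbb{E}[w_{k,n}^2]=1/(3N)$;
\item independence across $k$, which gives $\mathbb{E}[w_n^2]=K/(3N)+K(K-1)/(4N^2)$.
\end{itemize}
Summing over the $N$ segments and combining the $1/(ML_w)$ factor with the $(L_w-1)$ term should produce the first bracket, $\frac{K(K-1)}{4N}+K\eta_N+N\eta_N^2+\frac{KL_w}{3}$. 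For the indicators, the analogous computation uses $\mathbb{E}[b_{k,n}]=\mathbb{E}[b_{k,n}^2]=1/N$, which gives $\mathbb{E}[K_n^2]=K/N+K(K-1)/N^2$.

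\textbf{Step 3: sum the indicator weights.} I then collect $\sum_n c_n^2$. Over both halves this is $2\sum_{j=0}^{N/2-1}j^2=\frac{(N/2-1)(N/2)(N-1)}{3}$. Combined with the prefactor $4/N^2$, this should yield $\frac{N^2-3N+2}{3N}$, since $(N-1)(N-2)=N^2-3N+2$.

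The main obstacle I expect is the bookkeeping of $\eta_N$ and the power normalization from Proposition \ref{prop:fair_transmission}, and in particular whether the indicator terms share the same $\eta_N$ as the continuous ones. Here I would follow the paper's convention that $w_\text{max}=1$ for every codeword, so a single $\eta_N$ applies. I would also check that the segment boundaries (a measure-zero set) do not affect the expectations, and I would treat the degenerate case $N=2$ by setting the indicator contribution to zero, which is consistent with $L_b=0$.
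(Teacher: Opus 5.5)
Your proposal is correct and follows the paper's proof essentially step for step. You take the per-codeword conditional variances from the statistical-CSI case of Theorem~\ref{theorem_cond_var_w} (with $L_w$, $L_b$ and $\eta_N$), combine them as a weighted sum of variances through the linear decoder, and average using $\mathbb{E}[w_{k,n}]=1/(2N)$, $\mathbb{E}[w_{k,n}^2]=1/(3N)$ and $\mathbb{E}[b_{k,n}]=1/N$. You also make explicit the weight sum $\sum_n c_n^2=N(N-1)(N-2)/12$, which the paper leaves implicit.

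One caveat concerns your justification of independence: ``disjoint channel uses'' is not enough under the block-fading model if those uses lie in one coherence block. A device in segment $n$ feeds both $\widehat{w}_n$ and $\widehat{K}_n$ through the same $g_{m,k}$, and under statistical CSI this gives $\mathrm{Cov}(\widehat{w}_n,\widehat{K}_n\,|\,\mathbf{x})=w_n/M\neq 0$. You therefore need, as the paper states, that the codewords are aggregated on independent resources, i.e.\ independent channel realizations.
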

\begin{proof}
	Appendix \ref{proof_generalized_affine_mapping}.
\end{proof}
Finding an optimal $N$ for general $L$ in closed form is not straightforward as it requires solving the Diophantine equation $L= NL_w + (N-2)L_b$. Even if multiple feasible $N$ are found given $L$, its important to ensure the mappings with different $N$ expend the same amount of power, as given by Proposition \ref{prop:fair_transmission}.
\begin{proposition}[Transmission Power Normalization]\label{prop:fair_transmission}
	Let $x_k\sim\text{U}(-1,1)$, $t_{k,l}$ one of the $L$ symbols transmitted in (\ref{eq:channel_model}). Then for normalization of total energy with respect to the Unit Affine mapping with power control $\rho_k, \eta$ from Theorem \ref{theorem_cond_var_w}, the Extended Affine mapping should use $\rho_{k,N}$ and $\eta_{N}$ as
	\begin{equation*}
		\begin{split}
			&\rho_{k,N} = \rho_k\frac{NL}{L + L_b(N-2)},\\
			&\Leftrightarrow\\
			&\eta_{N}=\eta \frac{L + L_b(N-2)}{NL}<\eta, \forall N,
		\end{split}
	\end{equation*}
	where $L_b=0$ for $N=2$.
\end{proposition}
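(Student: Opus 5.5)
The plan is to equate the expected total transmit energy per device over the $L$ channel uses for the two schemes, averaging over $x_k\sim\text{U}(-1,1)$ and the random phases. With $t_{k,l}=\sqrt{w_k}s_{k,l}$ we have $\mathbb{E}[|t_{k,l}|^2\,|\,w_k]=\rho_k w_k$. So the energy of a scheme is $\rho$ times the sum, over all transmitted symbols, of the expected codeword value carried by each symbol. The normalization of $\eta_N$ then follows from the coupling $\rho_k=1/(\eta\widehat{\beta_k})$ in (\ref{eq:opt_parameters}). This coupling is kept for the Extended Affine mapping so that codewords remain equally weighted at the receiver, as required by Theorem \ref{theorem_cond_var_w}.

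\textbf{Step 1 (Unit Affine energy).} Here $w_k=(x_k+1)/2\sim\text{U}(0,1)$, so $\mathbb{E}[w_k]=1/2$. The single codeword is sent on all $L$ symbols, so the expected energy is $L\rho_k/2$.

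\textbf{Step 2 (Extended Affine energy).} First consider the continuous codewords. Exactly one segment $\mathcal{X}_n$ contains $x_k$, with probability $1/N$ for each $n$, since the segments have equal length $2/N$. Conditional on that segment, (\ref{eq:AA_N_encoder}) gives $w_{k,n}\sim\text{U}(0,1)$, and all other $w_{k,n'}$ are zero. Hence $\sum_{n=1}^N\mathbb{E}[w_{k,n}]=1/2$. Each continuous codeword is repeated $L_w$ times, which contributes $\rho_{k,N}L_w/2$.

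Next consider the indicator codewords. Only the $N-2$ informative indicators are transmitted, each with $\mathbb{E}[b_{k,n}]=1/N$ and each repeated $L_b$ times. This contributes $\rho_{k,N}L_b(N-2)/N$.

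Substituting $L_w=(L-(N-2)L_b)/N$ from $L=NL_w+(N-2)L_b$ gives the total
\[
\rho_{k,N}\,\frac{L+(N-2)L_b}{2N}.
\]

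\textbf{Step 3 (equate and convert to $\eta_N$).} Setting the Step 2 total equal to $L\rho_k/2$ yields $\rho_{k,N}=\rho_k\,NL/(L+L_b(N-2))$. Because $\rho_{k,N}=1/(\eta_N\widehat{\beta_k})$ and $\rho_k=1/(\eta\widehat{\beta_k})$, this is equivalent to the stated $\eta_N$.

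For the strict inequality $\eta_N<\eta$ we need $L+L_b(N-2)<NL$, i.e.\ $L_b(N-2)<(N-1)L$. This holds because $L\geq NL_w+(N-2)L_b>(N-2)L_b$, using $L_w\geq 1$, and $N-1\geq 1$. The case $N=2$ (so $L_b=0$) gives $\eta_2=\eta/2$, consistent with $\eta_\text{AA}=\eta/2$ used earlier.

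The main subtlety I expect is the bookkeeping in Step 2. One must argue that exactly one continuous codeword is nonzero and that it is uniform on $[0,1]$. One must also count only the $N-2$ transmitted indicators (those for $n=1$ and $n=N/2+1$ are not sent), rather than all $N$.
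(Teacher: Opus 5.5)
Your proposal is correct and follows essentially the same route as the paper's proof: compute the Unit Affine energy $L\rho_k/2$, compute the Extended Affine energy $\rho_{k,N}(L+L_b(N-2))/(2N)$ by conditioning on the segment (counting only the $N-2$ transmitted indicators), equate the two, and deduce $\eta_N<\eta$ from $L(N-1)>L_b(N-2)$ via $L=NL_w+(N-2)L_b$. You also make explicit two points the paper leaves implicit: the coupling $\rho_{k,N}=1/(\eta_N\widehat{\beta_k})$ that justifies the ``$\Leftrightarrow$'', and the use of $L_w\geq 1$ for the strict inequality.
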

\begin{proof}
	Appendix \ref{proof_fair_transmission}.
\end{proof}
Interestingly, Proposition \ref{prop:fair_transmission} implies that without normalization the energy consumption of the Extended Affine is always lower than the Unit Affine mapping. 
Numerical evidence from simulations with power normalization in Section \ref{sec:ext_affine_sim} show that either $N=2$ or $N=4$ is optimal depending on $L$. Analytically, for asymptotically large $L$, and thus large $L_w, L_b$, the minimizing $N$ of the variance in Theorem \ref{theorem:generalized_affine_mapping} with power normalization is indeed $N=4$. For general $L$ this remains a conjecture. Finally, note that the performance of the Extended Affine mapping can be improved further by using a count detector as described in Section \ref{sec:majority_voting}, rather than the relaxed sum estimator (\ref{theestimator}), which is left to future work.

\section{Numerical Evaluation}\label{simulation}
In all numerical  examples, power is set to $P=1$ and the large-scale fading coefficient $\beta_k\equiv\beta$ is equal for all devices. Then $\eta = w_\text{max}/\beta$ as given by (\ref{eq:opt_parameters}) which can be interpreted as the signal-to-noise-ratio (SNR). Note that this is representative of the case with heterogeneous channel conditions since all devices adapt to the weakest channel by design. The Affine mapping is given $L$ symbols per codeword, but the Augmented Affine $L/2$. For the instantaneous CSI case the simulations are \textit{not} conditional on $|g_{m,k}|$, rather the total variance is evaluated. Finally, for cases where the source data sum estimator is biased, such as the projected estimator, we present its MSE instead of its variance. 

\subsection{Estimator Variance Given Equal Data Across Devices}\label{sec:sim_cond_variance}
Fig. \ref{fig:A_vs_AA} shows a numerical evaluation of $\var(\widehat{x}|x)$ from Theorem \ref{theorem_cond_variance} for different values of $x_k = x/K$ such that all $x_k$ are equal.
\begin{figure}[t!]
	\centering
	\includegraphics[width=0.85\linewidth]{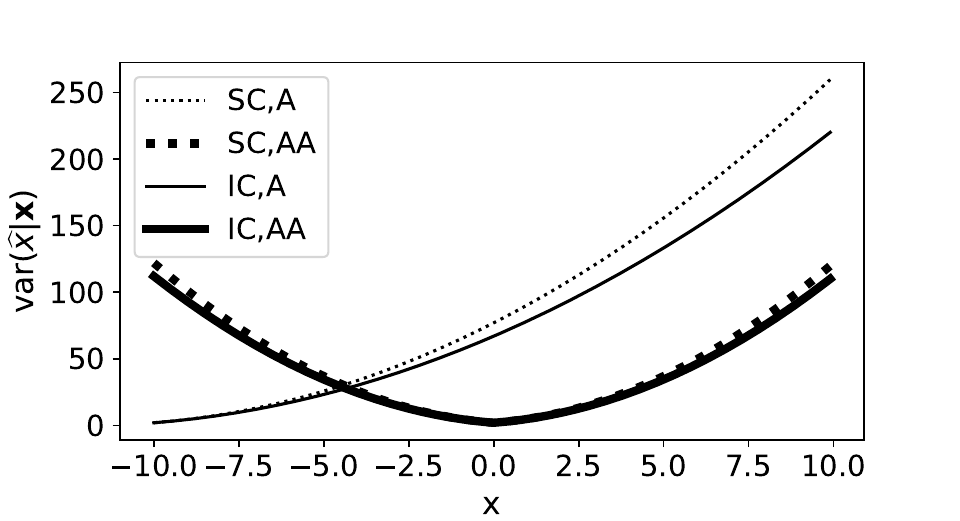}
	\caption{$\var(\widehat{x}|x_k=x/K)$ under the Affine vs Augmented Affine mapping against $x$. Computed according to Theorem \ref{theorem_cond_variance}. ${K=10}, {L=2},{M=1}, {\eta=1}$ (without energy normalization).}\label{fig:A_vs_AA}
\end{figure}
Here no energy normalization is applied since it is conditional on $x$. It shows that none of the two mappings has exclusively lower variance conditional on $x$. Instead, the choice between the mappings depends on the distribution of $x$ in the range $[-10, 10]$: If $x$ tends to the lower values the Affine mapping is better, otherwise the Augmented Affine is better. Because of this ambiguity, looking at the total variance $\var(\widehat{x})$ as done in Fig. \ref{fig:MSE_A_vs_AA} will be more useful. The intuition is that there is a trade-off between uncertainty from phase and uncertainty from additive noise. The first is reduced by using multiple codewords as in the Augmented Affine mapping such that fewer devices transmit on each codeword, increasing the probability of alignment in phase. The second is reduced by repeating the codewords as in the Affine mapping such that noise is increasingly averaged to zero. For instantaneous CSI the improvement in variance over statistical CSI is marginal and $|g_\text{min}|$ is likely small making the variance high, suggesting that instantaneous CSI is unnecessary for practical applications. This tradeoff between phase uncertainty and additive noise, along with Fig. \ref{fig:A_vs_AA} was the inspiration for the Extended Affine mapping. Fig. \ref{fig:A_vs_AA} suggests that more segments lead to an increasingly flat $\var(\widehat{x}|\mathbf{x})$, but not necessarily lower, when the number of splits increases. 

\subsection{Estimator Total Variance With Uniformly Distributed Data}
Fig. \ref{fig:MSE_A_vs_AA} shows a numerical evaluation of $\mathbb{M}\mathbb{S}\mathbb{E}(\widehat{x})$ for different values of $1/\eta$ based on Theorem \ref{theorem_total_variance}.
\begin{figure}[t!]
	\centering
	\includegraphics[width=0.85\linewidth]{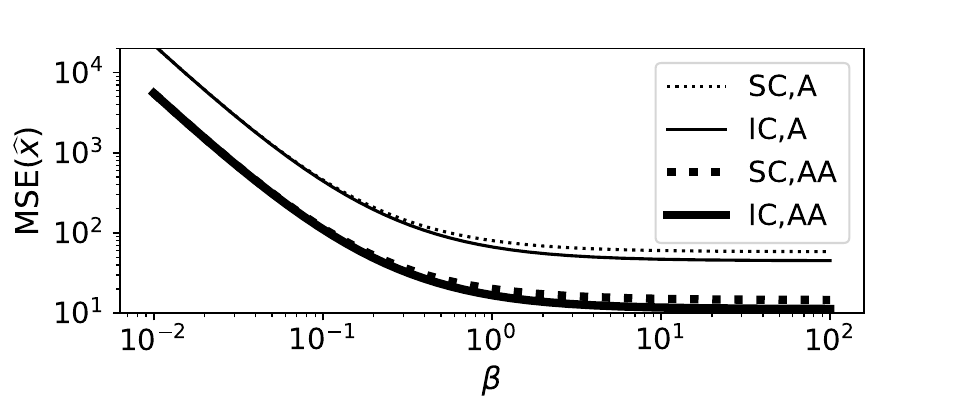}
	\caption{$\mathbb{M}\mathbb{S}\mathbb{E}(\widehat{x}) = \var(\widehat{x})-K/3$ of the Affine vs Augmented Affine mapping against ${\beta=1/\eta}$. Computed according to Theorem \ref{theorem_total_variance}. ${K=10},{M=1},L=2$.}\label{fig:MSE_A_vs_AA}
\end{figure}
As   stated by Corollary \ref{corollary_AA_better_than_A}, Fig. \ref{fig:MSE_A_vs_AA} shows how the Augmented Affine mapping has a lower estimation variance than the Affine for both instantaneous and statistical CSI under the uniform distribution of $x_k$. Fig. \ref{fig:MSE_A_vs_AA} also follows the corresponding numerically approximated MSE in Fig. \ref{fig:MSE_A_vs_AA_proj}.

\subsection{Estimator Variance With Statistical CSI and Projection}
\begin{figure}[t!]
	\centering
	\includegraphics[width=0.85\linewidth]{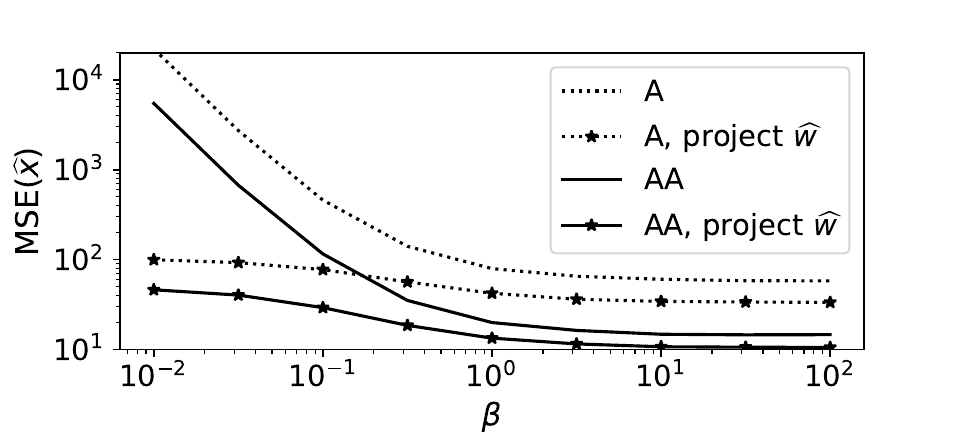}
	\caption{$\mathbb{M}\mathbb{S}\mathbb{E}(\widehat{x})$ of the Affine vs Augmented Affine mapping against ${\beta=1/\eta}$ with and without projection of $\widehat{w}$ onto $[0,Kw_\text{max}]$. Statistical CSI used. Number of trials $=10^5$, ${K=10, L=2, M=1}, x_k\sim\text{U}(-1, 1)$. \textit{Not} conditional on $|g_{m,k}|$.}\label{fig:MSE_A_vs_AA_proj}
\end{figure}
\begin{figure}[t!]
\centering
\includegraphics[width=0.85\linewidth]{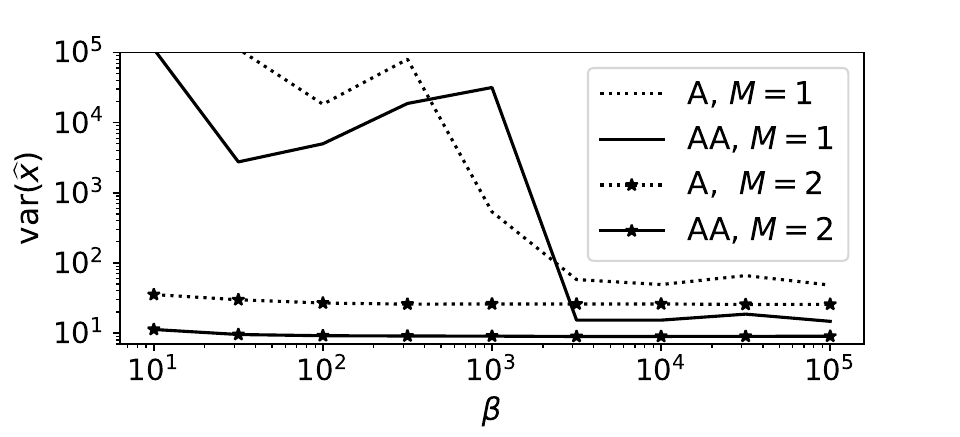}
\caption{$\var(\widehat{x})$ of the Affine vs Augmented Affine mapping against $1/\eta=\beta$. Instantaneous CSI used. For $M=1$, the simulated variance is theoretically infinite, but demonstrates its erratic behavior. Number of trials $=10^5$, ${K=10},L=2, x_k\sim\text{U}(-1, 1)$. \textit{Not} conditional on $|g_{m,k}|$.}\label{fig:MSE_IC}
\end{figure}

\begin{figure}[t!]
	\centering
	\includegraphics[width=0.85\linewidth]{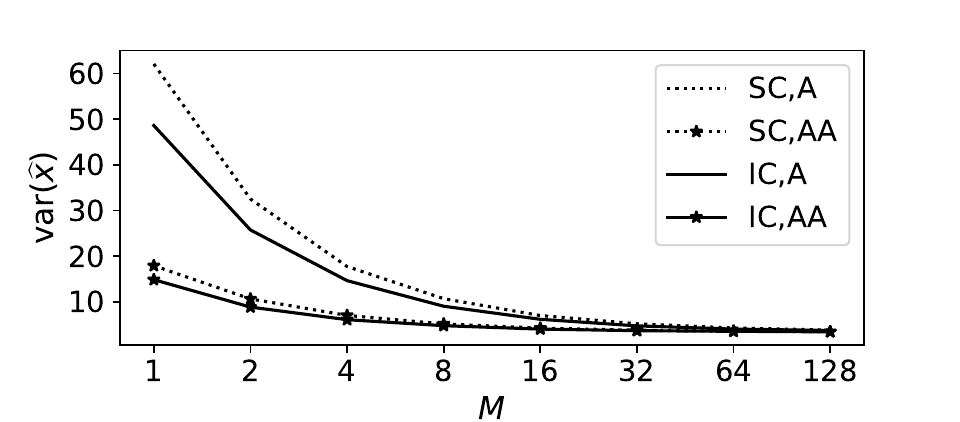}
	\caption{$\var(\widehat{x})$  of the mappings without projection under statistical and instantaneous CSI against the number of receive antennas $M$. Number of trials $=10^5$, ${K=10, L=2, \beta=10^4}, x_k\sim\text{U}(-1, 1)$. \textit{Not} conditional on $|g_{m,k}|$.}\label{fig:MSE_var_M}
\end{figure}
\begin{figure}[t!]
\centering
\includegraphics[width=0.85\linewidth]{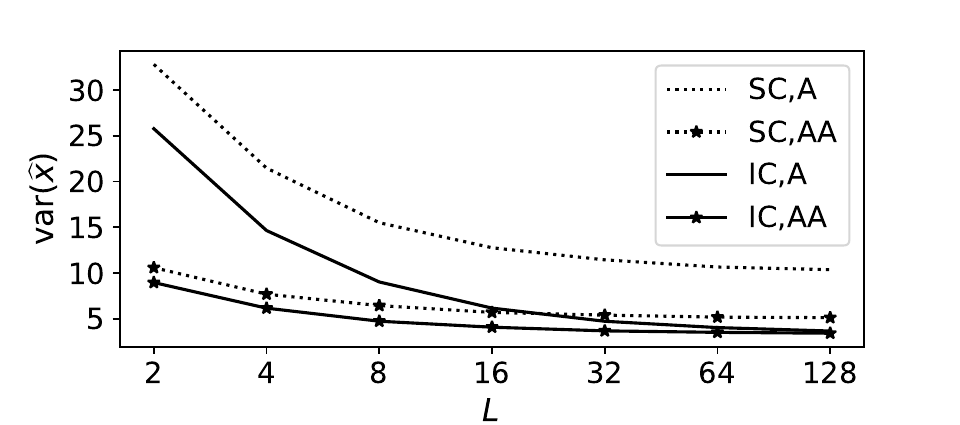}
\caption{$\var(\widehat{x})$  of the mappings without projection under statistical and instantaneous CSI against the number transmissions $L$. Number of trials $=10^5$, ${K=10, M=2, \beta=10^4}, x_k\sim\text{U}(-1, 1)$. \textit{Not} conditional on $|g_{m,k}|$.}\label{fig:MSE_var_L}
\end{figure}
\begin{figure}[t!]
\centering
\includegraphics[width=0.85\linewidth]{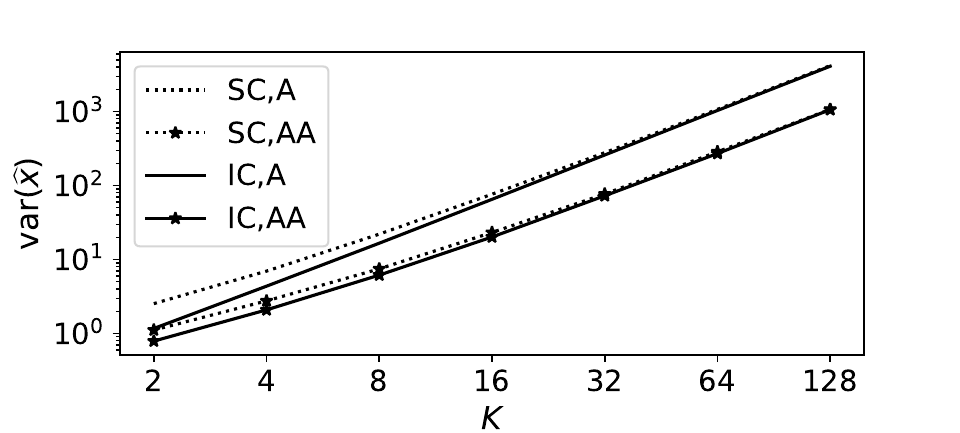}
\caption{$\var(\widehat{x})$  of the mappings without projection under statistical and instantaneous CSI against the number of devices $K$. Number of trials $=10^5$, $M=2, L=2, \beta=10^4, x_k\sim\text{U}(-1, 1)$. \textit{Not} conditional on $|g_{m,k}|$.}\label{fig:MSE_var_K}
\end{figure}
Fig. \ref{fig:MSE_A_vs_AA_proj} presents a simulation of $\mathbb{M}\mathbb{S}\mathbb{E}(\widehat{x})$ as a function of $\beta$ where the projected estimator (\ref{eq:projection}) is used.  Fig. \ref{fig:MSE_A_vs_AA_proj} shows that the projected estimator $\overline{\widehat{w}}\in[0, Kw_\text{max}]$ significantly reduces $\mathbb{M}\mathbb{S}\mathbb{E}(\widehat{x})$, particularly at low SNR $\beta$. However, the $\mathbb{M}\mathbb{S}\mathbb{E}(\widehat{x})$ without projection does not converge to $\mathbb{M}\mathbb{S}\mathbb{E}(\widehat{x})$ with projection as the SNR grows since the uncertainty in the amplitude gain $|g_{m,k}|$ relative to $\beta$ remains as $\beta$ grows. 
Furthermore, the Augmented Affine mapping is again better than the Affine, as stated by Corollary \ref{corollary_AA_better_than_A}. Finally, Fig. \ref{fig:MSE_A_vs_AA_proj} shows the simulated result which closely resembles the theoretical $\mathbb{M}\mathbb{S}\mathbb{E}(\widehat{x})$ from Theorem \ref{theorem_total_variance}, shown in Fig. \ref{fig:MSE_A_vs_AA}.

\subsection{Estimator Variance With Instantaneous CSI}
Fig. \ref{fig:MSE_IC} presents the simulated total variance for different $\beta$ under instantaneous CSI. The erratic behavior of the variance when $M=1$ is caused by the division of $|g_\text{min}|$  in the estimator (\ref{theestimator}) by $\widetilde{\mathbf{r}}$, which does not have a finite first moment. This curve is included only to demonstrate this fact. However, with $M=2$ the estimator performs well. While not included in Fig. \ref{fig:MSE_IC}, the projected estimator from (\ref{eq:projection}) was simulated with instantaneous CSI, which decreased the MSE similar to the projected estimator with statistical CSI in Fig. \ref{fig:MSE_A_vs_AA_proj}.

\subsection{Simulation of Estimator Variance for Various M, L, K}
Figures \ref{fig:MSE_var_M}, \ref{fig:MSE_var_L}, and \ref{fig:MSE_var_K} show the variance of the proposed estimators for various $M, L, K$.  When $M$ grows the variance of the estimator with instantaneous CSI will approach that of statistical CSI since then $\widehat{\beta_k}\approx \beta$, as implicitly given by (\ref{eq:betahattdef}). Increasing $L$ benefits the case with instantaneous CSI more than statistical CSI which has a non-zero limit for high $L$, which is in accordance with Theorem \ref{theorem_total_variance}. The  variance grows approximately linearly with  $K$ on a log-log scale, in accordance with Theorem \ref{theorem_total_variance}. 

\subsection{Extended Affine Mapping Versus Augmented Affine}\label{sec:ext_affine_sim}
Fig. \ref{fig:extended_vs_AA} presents the performance of the Extended Affine mapping, computed according to Theorem \ref{theorem:generalized_affine_mapping} where $N, L_w, L_b$ have been optimized numerically for each $L$ by grid-search. Interestingly, $N=2$ is optimal up to a certain $L$, where $N=4$ becomes optimal, this can be seen as the curve corresponding to the Extended Affine deviates from the Augmented Affine curve. Power normalization according to Proposition \ref{prop:fair_transmission} is applied, however, even without normalization $N=4$ remains optimal for sufficiently large $L$.
\begin{figure}[t!]
	\centering
	\includegraphics[width=0.85\linewidth]{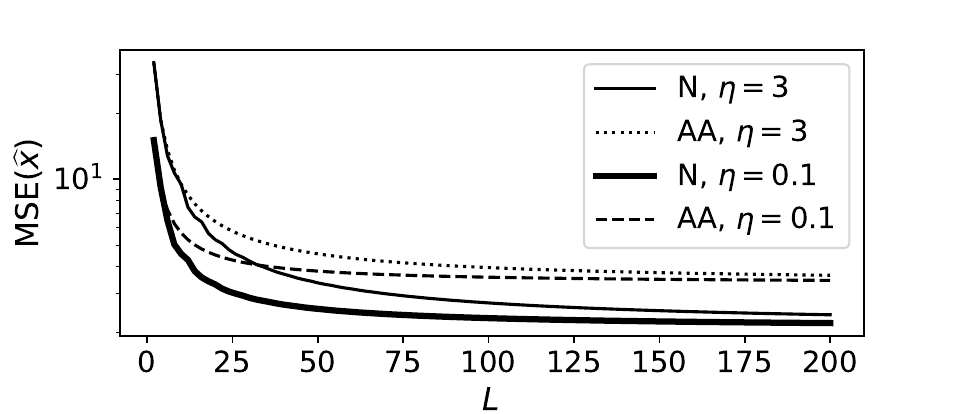}
	\caption{$\mathbb{M}\mathbb{S}\mathbb{E}(\widehat{x}) = \var(\widehat{x})-K/3$ of the Extended Affine (N) vs Augmented Affine mapping ($N=2$) against $L$, where the extended has the optimal $N, L_b, L_w$. Computed according to Theorem \ref{theorem:generalized_affine_mapping}. ${K=10}, {M=1}, {x_k\sim\text{U}(-1, 1)}$.}\label{fig:extended_vs_AA}
\end{figure}

\subsection{Impact of Source Data Distribution}
To simulate heterogeneous source data across devices, let $x_k\sim\mathcal{N}(b_k,1)$ where $K=10$ and $b_k$ is uniformly spaced in $[-2,2]$: $b_1=-2, b_2=-2+4/9, \dots b_{10}=2$. Before transmission $x_k$ is projected onto $[x_\text{min}, x_\text{max}]=[-1,1]$, this leads to a bias $(\mathbb{E}[\widehat{x}|\mathbf{x}]\neq x)$; hence,   $\mathbb{M}\mathbb{S}\mathbb{E}(\widehat{x})$ instead of $\var(\widehat{x})$ is simulated. Given by Fig. \ref{fig:non_iid_source_data}, the Unit Augmented Affine mapping still achieves lower estimation error than the Unit Affine mapping. An intuitive reason is that the average $b_k$ is near 0 which gives a symmetric distribution of $x$ around $0$, which by Fig. \ref{fig:A_vs_AA} is suited to the Augmented Affine mapping.

For simulation of extreme data distributions, which may occur in real applications such as FL, consider the following i.i.d. source data: Cauchy (CY) where $x_k$ are standard Cauchy distributed, Log-Normal (LN) where $x_k=e^{\phi_k}, \phi_k\sim\mathcal{CN}(0,1)$, Shifted Uniform (SU) where $x_k\sim\text{U}(-2,0)$ and finally a Binomial (B) with 10 draws normalized as $x_k=(q_k-5)/5, q_k\sim\text{B}(n=10, p=0.5)\in\{0,1,\dots,10\}$. 
Let $[x_\text{min}, x_\text{max}]=[-1,1]$ with projection before transmission as with heterogeneous source data. The Cauchy and Log-Normal cases represent data with heavy-tailed distributions, where CY can be both negative and positive, LN only positive. Shifted Uniform represents the case where all data is in the low end. Binomial represents a setting with sparsely distributed data.  Based on Fig. \ref{fig:weird_source_data} the used mappings do not handle the extreme outliers of the Cauchy source data in terms of MSE (A and AA overlapping in the figure). However, this is expected as the Cauchy distribution does not have a finite mean or variance. Nevertheless, the poor performance with heavy-tailed data suggests the need of a robust mapping for NC-OAC. For LN the performance is also poor, but the relation between A and AA is consistent with previous results. For SU the relation between A and AA is flipped as predicted by Fig. \ref{fig:A_vs_AA}, while the relation still holds for the normalized Binomial case. 

\begin{figure}[t!]
	\centering
	\includegraphics[width=0.85\linewidth]{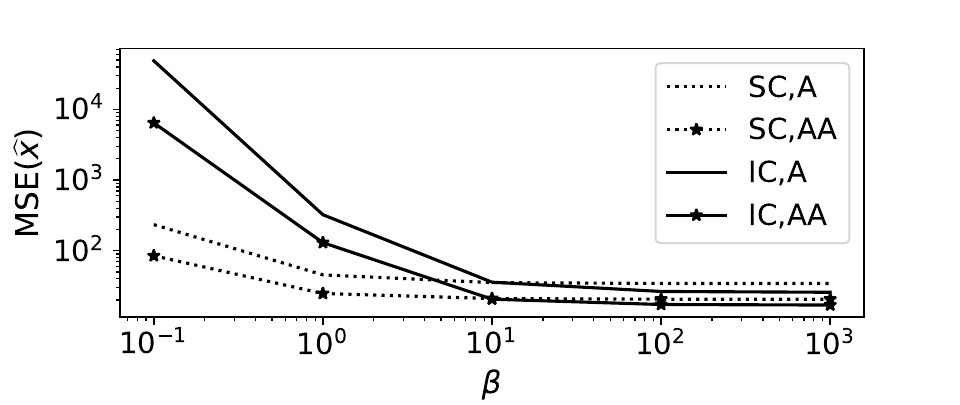}
	\caption{$\mathbb{M}\mathbb{S}\mathbb{E}(\widehat{x})$ against $\beta$ with $x_k\sim\mathcal{CN}(b_k, 1)$ where ${b_k}$ uniformly spaced in $[-2,2]$ including edges. $M=2,L=2, x_\text{min}=-1, x_\text{max}=1, K=10$. Number of trials $=10^6$.}\label{fig:non_iid_source_data}
\end{figure}	

\begin{figure}[t!]
\centering
\includegraphics[width=0.85\linewidth]{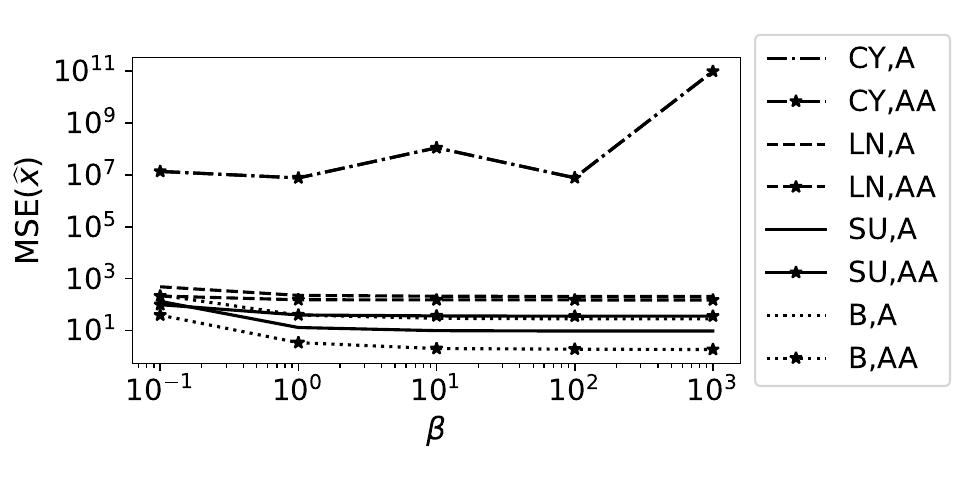}
\caption{$\mathbb{M}\mathbb{S}\mathbb{E}(\widehat{x})$ against $\beta$ for various extreme source data distributions. $M=2,L=2, x_\text{min}=-1, x_\text{max}=1, K=10$. Number of trials $=10^5$.}\label{fig:weird_source_data}
\end{figure}

\subsection{Majority Voting}
For majority voting, consider $x_k\in\{-1,1\}$ with $\text{P}(x_k=1)=p$, $\text{P}(x_k=-1)=1-p$ with vote probability $p$. In Fig. \ref{fig:MV_accuracy_p} the approximated vote accuracy against $p$ is presented. First, similar to Fig. \ref{fig:A_vs_AA} the best mapping for majority voting (Affine or Augmented Affine) depends on if the expected vote is more positive ($1$) or negative ($-1$) depending on $p$. Considering the minimum and average accuracy across $p$ in Fig. \ref{fig:MV_min_vs_avg}, the Augmented Affine is again superior. Furthermore, both the average and minimum majority vote accuracy approaches 1 as the number of antennas $M$ grows. Note that if $K$ was even the accuracy would not approach $1$ since the mappings do not handle a tie vote when noise is present; in this case the outcome would be random. No simulation of counting is included, as previously mentioned sum estimation is an approximate count.

\begin{figure}[t!]
	\centering
	\includegraphics[width=0.85\linewidth]{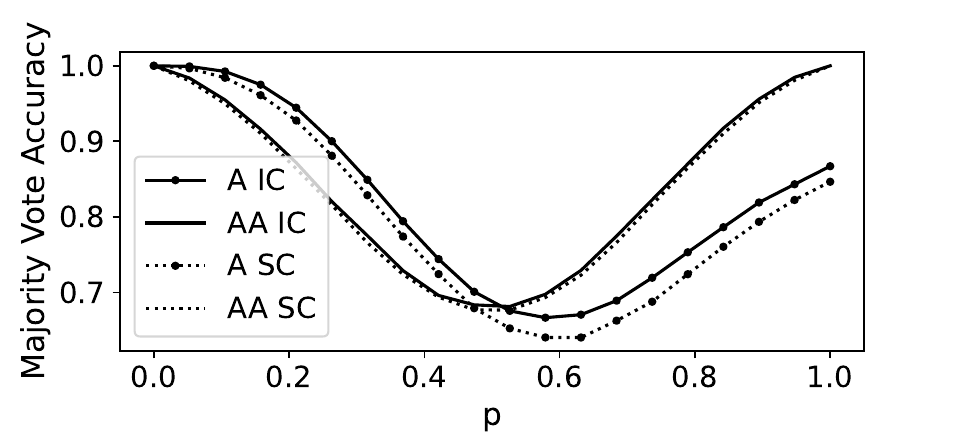}
	\caption{Majority vote accuracy against $p\in[0,1]$. Number of trials $=10^5$. $K=11, M=2, L=2, \beta=10^3$.}\label{fig:MV_accuracy_p}
\end{figure}

\begin{figure}[t!]
	\centering
	\includegraphics[width=0.85\linewidth]{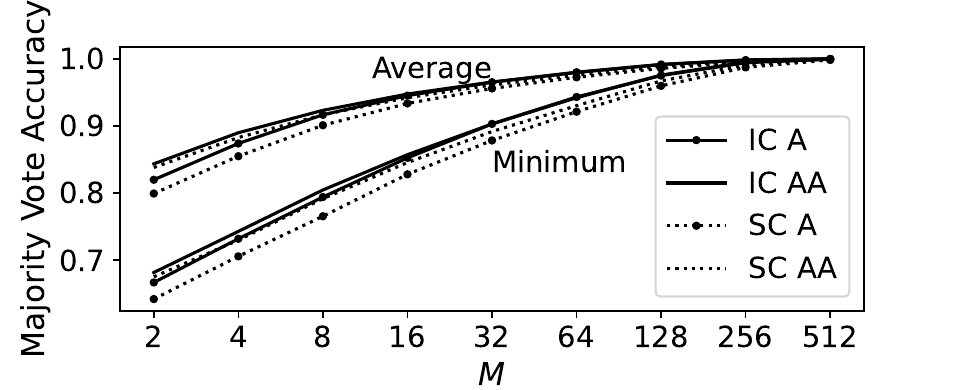}
	\caption{Minimum and average majority vote accuracy across 20 uniformly spaced vote probabilities $p\in[0,1]$ against $M$. Number of trials $=10^5$. $K=11, L=2, \beta=10^3$.}\label{fig:MV_min_vs_avg}
\end{figure}

\subsection{Application to FL With Projected Gradients}
We apply the NC-OAC framework with the unit mappings (\ref{eq:simple_affine_mapping_coeff}) and (\ref{eq:simple_augmented_affine_coeff}) for aggregating the local gradients in FL. The learning task is to classify digits from the MNIST dataset \cite{lecun2002gradient}. The learning model is a multilayer-perceptron with input size $28\times28$, hidden layer size $10$ with rectified linear unit activation, output size $10$ with softmax activation and cross-entropy loss \cite{Goodfellow-et-al-2016}. Each device holds $5000$ training samples from a single digit class. 
The NC-OAC projects the gradients onto $[-2,2]$ before transmission and the receiver exploits the projected sum estimator using (\ref{eq:projection}). The \textit{genie} computes $\nabla F\left(\boldsymbol{\theta}^t\right)$ exactly for each data batch. The results in Fig. \ref{fig:FL} show that the superiority of the Augmented Affine holds, which is consistent with Corollary \ref{corollary_AA_better_than_A}. Furthermore, statistical CSI performs not much worse than instantaneous CSI, or even the genie case. Details on FL can be found in Appendix \ref{sec:FL_mse}.
\begin{figure}[t!]
	\centering
	\includegraphics[width=0.85\linewidth]{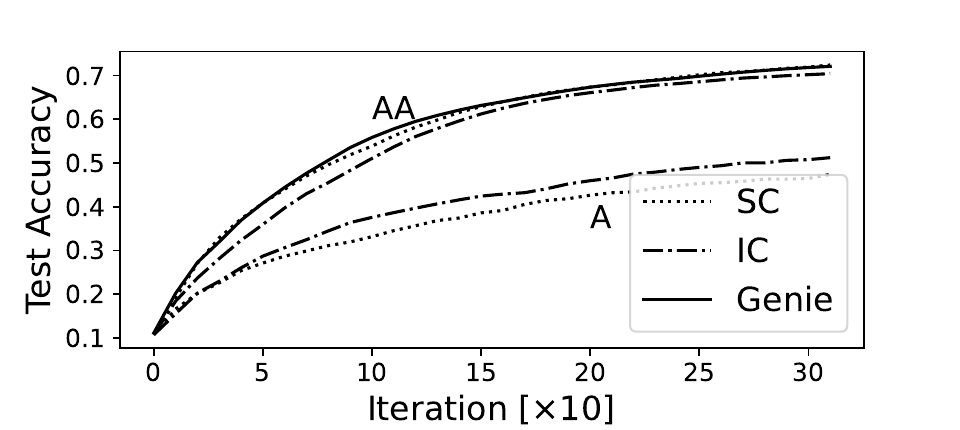}
	\caption{Test Accuracy of MNIST image classification. Number of trials $=20$, $K=10$, $\gamma=0.01$, $x_\text{max}=-x_\text{min}=2$, number of epochs $=4$, training samples per device $=5000$, unique digit class per device $=1$, batch size $=32$. $M=2, L=4, \beta=10^4$. \textit{Not} conditional on $|g_{m,k}|$.}\label{fig:FL}
\end{figure}

\section{Examples of framework coverage}\label{sec:demonstration_coverage}

As already summarized in Table \ref{tab:mapping_applications}, our analysis framework covers many NC-OAC schemes proposed in the literature. In what follows, we discuss some of these in more detail:

\begin{itemize}
	\item \textbf{Jam signalling}: In \cite{krohn2005collaborative} an RFID reader should retrieve a histogram of ``sell-by-dates'' of ``goods'' equipped with RFID tags. To this end, each day of the year is allocated 4 slots where the RFID transmits non-coherently in one of the 4 slots randomly, corresponding to their date. The scanner detects (counts) whether at least one transmission has occured in one of the slots for each day, then constructs a histogram based on detections in $365\times 4$ slots. For NC-OAC this is performed by devices transmitting $x_k\in\{0,1\}$ with the Affine mapping on one of $365\times 4$ codewords, where the decoder is a binary detector.
	\item \textbf{Balanced number systems}: In \cite{csahin2023over}, source data $x_k\in\mathbb{R}$ corresponding to FL gradient elements is encoded to a set of $D$ integers (``numerals''). Each numeral is taken from a set of $L$ integers. The ``balanced number system'' works since summation of the numerals from each device is equivalent to the wanted sum $x$. Each possible value for each numeral is mapped to a unique codeword, giving $D\times L$ codewords in total. Each device transmits 1 on the codewords corresponding to the value of each numeral, and 0 on the others. At the receiver the NC-OAC sum of each codeword gives the sum of the numerals across the devices, which can be decoded to the wanted sum $x$. 
	\item \textbf{Chirp-based majority vote}: In \cite{hoque2023chirp} FL is done with binary gradient elements ($x_k\in\{-1,1\}$) and majority vote at the receiver. The NC-OAC used is majority vote with AA as described herein, the codewords are modulated using chirps.  
	\item \textbf{Decentralized consensus}: In \cite{michelusi2024non} a decentralized consensus process is used to aggregate model updates for decentralized gradient descent. Consider $\beta_{k,k'}=\beta_{k',k}$ the channel power between device $k$ and $k'$, where each device knows only the sum of its downlink channel powers $\sum_{k'\neq k}\beta_{k,k'}$. Furthermore, let $\rho_k=P$ and $\eta=1/P$. During each iteration of the consensus process, the devices are split into two alternating groups, one transmitting and one receiving. The transmitting group projects data onto $[-1,1]$ and exploits the Augmented Affine mapping. In expectation the receivers measure $\sum_{k'\neq k}\beta_{k',k}w_{k'}$ (to some scalar factor compensating for neighbors not transmitting) which is a weighted (biased) sum of the values from the neighbors of $k$.  While this is a biased sum, decentralized consensus as in \cite{michelusi2024non} only requires the consensus weights ($\beta_{k,k'}$ and self-weight) to sum to zero by using $\sum_{k'\neq k}\beta_{k,k'}$. Furthermore, while \cite{michelusi2024non} uses $2d+1$ codewords (compared to $2d$ as herein) to aggregate $d$ scalars, the final codeword functions as a pilot as it enforces the codewords to sum to 1, enabling instantaneous estimation of $\sum_{k'\neq k}\beta_{k,k'}$.
\end{itemize}

\section{Conclusion}

We developed a framework for the analysis of  Non-Coherent Over-the-Air Computation (NC-OAC) for  unbiased estimation of the sum of distributed source data,  when channel phase knowledge is unavailable at the access point and the devices. 
We particularly studied two of the most common NC-OAC methods of mapping real-valued source data to non-negative codewords. Then, we showed that a mapping that splits the dynamic range of source data across two resources can decrease the sum estimation variance by an order of magnitude, compared to a mapping that repeats the entire range over two resources. In addition to sum estimation, our framework supports discrete counting and majority voting across participating devices. Finally, we propose and study a new mapping that exhibits superior performance compared to existing mappings for sum estimation. 

Possible directions for future work are to: (1) find a closed-form expression for the MSE of the projected codeword sum estimator in (\ref{eq:projection}),  (2) further investigate optimal majority vote and count detection, and 
(3) prove the conjecture that the optimal $N$ for the proposed Extended Affine mapping is $N=2$ or $N=4$. Finally, there are many more variations of mappings to study, such as finding the optimal Augmented Affine mapping, and mappings robust to outliers. One may also mix the Affine and Augmented Affine mappings in the same aggregation instance for $L\geq 3$: One channel use for the Affine, two for the Augmented Affine. Another alternative mapping is switching the sign of $a$ in the Unit Affine mapping described in Example \ref{simple_A}, which would horizontally mirror the Affine mapping curve in Fig. \ref{fig:A_vs_AA}. Such ways of mixing the mappings could enable further adapting $\var(\widehat{x}|\mathbf{x})$ to the distribution of $x$. 
\appendices

\section{Proof of Theorem \ref{theorem_cond_var_w}}\label{proof_theorem_cond_var_w}	
	The proof of Theorem \ref{theorem_cond_var_w} is split in two parts, first for statistical CSI, then for instantaneous CSI.
\begin{proof}[Proof Theorem \ref{theorem_cond_var_w} for Statistical CSI]
	First it is shown that 
	\begin{equation}\label{eq:unbiased_equality}
		\eta w_k\widehat{\beta_k}\rho_k = w_k, \forall k \Leftrightarrow \mathbb{E}[\widehat{w}|\mathbf{W}]=w,
	\end{equation}
	must hold: It follows from (\ref{theestimator}), (\ref{eq:SC_cross_correlation}), (\ref{eq:second_moment_gaussian_r}),  (\ref{eq:IC_crosscorr}), (\ref{eq:IC_secondmoment}) that  
	\begin{equation}\label{eq:unbiased_equality2}
		\begin{split}
			&\mathbb{E}\left[\widehat{w}|\mathbf{W}\right] = w \Leftrightarrow \eta\sum_{k=1}^{K} w_{k}\widehat{\beta_k}\rho_k = \sum_{k=1}^{K} w_{k}\Leftrightarrow\\
			&\sum_{k=1}^{K}(\eta w_{k}\widehat{\beta_k}\rho_k - w_k) = 0.
		\end{split}
	\end{equation}
	The last equality in (\ref{eq:unbiased_equality2}) is only satisfied under one of two conditions: Either 
	\begin{equation}\label{eq:equal_weight}
		\eta w_k\widehat{\beta_k}\rho_k - w_k=0, \forall k,
	\end{equation}
	or for at least one $k$, say $k'$
	\begin{equation}
		\eta w_{k'}\widehat{\beta_{k'}}\rho_{k'} - w_{k'}<0,
	\end{equation}
	implying that $\eta w_{k'}\widehat{\beta_{k'}}\rho_{k'} = \alpha_{k'} w_{k'}, \alpha_{k'} < 1$. However, assume all devices except ${k'}$ transmit $w_k=0$, then $w=w_{k'}$, but the AP will measure $\alpha_{k'} w_{k'}\neq w$. This means knowledge of $w$ is required either at ${k'}$ or at the AP such that the receive scaling factor $\eta$ can be adjusted, but this is not possible by the practical nature of the problem. Thus only (\ref{eq:equal_weight}) can hold, and is therefore both sufficient and necessary to make (\ref{theestimator}) unbiased. This shows that (\ref{eq:unbiased_equality}) must hold. Now, select $\rho_k$ for an equally weighted sum of codewords at the receiver:
	\begin{equation}\label{eq:rho_statistical_CSI}
		\eta w_k \rho_k \beta_k = w_k \Leftrightarrow \rho_k = \frac{1}{\eta\beta_k},
	\end{equation}
	which by (\ref{eq:unbiased_equality}) makes the codeword sum estimator (\ref{theestimator}) unbiased. Thus $\rho_k$ has been derived and $\mathbb{E}[\widehat{w}|\mathbf{W}]=w$ proven. Next, consider the conditional variance of $\widehat{w}$, where $(*)$ follows from tedious but standard calculations:
	\begin{equation}\label{w_variance_statistical_CSI}
		\begin{split}
			&\text{var}(\widehat{w}|\mathbf{W}) = \var\left(\frac{\widetilde{\mathbf{r}}^H\widetilde{\mathbf{r}}}{ML}-\eta\bigg|\mathbf{W}\right) = \frac{1}{M^2L^2}\var\left(\widetilde{\mathbf{r}}^H\widetilde{\mathbf{r}}|\mathbf{W}\right)\\
			&\overset{(*)}{=}\frac{1}{ML}\left((w + \eta)^2 + (L-1) \sum_{k=1}^Kw_k^2\right).
		\end{split}
	\end{equation}
	From (\ref{w_variance_statistical_CSI}) it is apparent that the variance is minimized by minimizing $\eta$. When the device with the weakest channel $\beta_k$, say $\beta_\text{min}$, transmits $w_\text{max}$ (\ref{eq:rho_statistical_CSI}) should still hold. Thus, the minimum $\eta$ is given when this device transmits at maximum power $P$:
	\begin{equation}\label{eq:statistical_csi_eta}
		\eta w_\text{max}\rho_k\beta_\text{min} = \eta P \beta_\text{min} = w_\text{max} \Leftrightarrow \eta = \frac{w_\text{max}}{P\beta_\text{min}}.
	\end{equation}
	This concludes the proof of Theorem \ref{theorem_cond_var_w} for the statistical CSI.
\end{proof} 
\begin{proof}[Proof Theorem 2 for Instantaneous CSI]
	Using the codeword sum estimator (\ref{theestimator}) conditional on $|g_{m,k}|$, one gets
	\begin{equation*}
		\begin{split}
			\mathbb{E}\left[\widehat{w}|\mathbf{W}\right] &= \frac{\eta\sum_{m=1}^M\sum_{k=1}^Kw_k|g_{m,k}|^2\rho_k}{M} = \eta\sum_{k=1}^Kw_k\rho_k\widehat{\beta_k}, 
		\end{split}
	\end{equation*}
	such that (\ref{eq:unbiased_equality}), like for statistical CSI in (\ref{eq:rho_statistical_CSI}) gives $\rho_k = \frac{1}{\eta \widehat{\beta_k}}$.
	This shows $\rho_k$ for the case of instantaneous CSI. Now consider $\var(\widehat{w}|\mathbf{W})$ when $M=1$, since then $\widetilde{r}_{m,l}\equiv \widetilde{r}_l$ are i.i.d. across $l$ given $|g_{k}|$ and $\mathbf{W}$. Then $\widetilde{r}_l^*\widetilde{r}_l$ are also i.i.d. across $l$ which gives:
	\begin{equation*}
		\begin{split}
			\var\left(\widehat{w}|\mathbf{W}\right)& = \var\left(\frac{\widetilde{\mathbf{r}}^H\widetilde{\mathbf{r}}}{L}-\eta\bigg|\mathbf{W}\right) = \frac{1}{L^2}\var\left(\widetilde{\mathbf{r}}^H\widetilde{\mathbf{r}}|\mathbf{W}\right)\\
			&=\frac{1}{L}\var\left(\widetilde{r}_{1,1}^*\widetilde{r}_{1,1}|\mathbf{W}\right)\equiv \frac{\eta^2}{L}\var\left(r^*r|\mathbf{W}\right),
		\end{split}
	\end{equation*}
	where $r\equiv r_{1,1}$ inserted into (\ref{eq:rml}) gives
	\begin{equation*}
		\begin{split}
			\var\left(r^*r|\mathbf{W}\right) &= \frac{1}{\eta^2}\left(w^2-\sum_{k=1}^Kw_k^2\right)+ \frac{2}{\eta}\sum_{k=1}^{K}w_k + 1,
		\end{split}
	\end{equation*}
	which if multiplied by $\eta^2$ gives (\ref{eq:varwgivenw}) for IC, which is minimized by minimizing $\eta$. Then through the same arguments as for statistical CSI in (\ref{eq:statistical_csi_eta}) one gets $\eta = \frac{w_\text{max}}{P\widehat{\beta_\text{min}}}$.
\end{proof}
 
\section{Proof of Theorem \ref{theorem_cond_variance}}\label{proof_theorem_cond_variance}
\begin{proof}[Proof]
	Recall (\ref{xestimator}) and (\ref{eq:varwgivenw}). The proof is split into four parts, one for each combination of CSI and mapping. Note that $\cdot|\mathbf{x}\Leftrightarrow \cdot|\mathbf{W}$ since the relationship is deterministic.
	\subsubsection{Statistical CSI for Unit Affine mapping}
	Consider coefficients $a,b$ generally, and recall from (\ref{eq:A_constraints}) that these fully determine $c,d$. Then, the conditional variance is found by insertion of  (\ref{A_encoder}), (\ref{A_decoder}) into (\ref{w_variance_statistical_CSI}):
	\begin{equation}\label{condvarSA}
		\begin{split}
			&\var(\widehat{x}|\mathbf{x})_{SC,A} = \var(c\widehat{w} - d\mathbf{x}) = c^2\var\left(\widehat{w}|\mathbf{W}\right)\\
			&\overset{(\ref{eq:simple_affine_mapping_coeff})}{=}\frac{1}{ML}\left(\left(x + K +2\eta\right)^2 + (L-1)\sum_{k=1}^K\left(x_k+1\right)^2\right).
		\end{split}
	\end{equation}
	
	\subsubsection{Statistical CSI for Unit Augmented Affine mapping}\label{statistical_CSI_AA_cond_var}
	Let $w_1=x_+ \equiv \sum_{k=1}^K(x_k)^+$, ${w_2=x_-\equiv\sum_{k=1}^{K}(-x_k)^+}$, such that $x=x_+-x_-$, then for $i\in\{1,2\}$
	\begin{equation}\label{eq:wcondvartox}
		\begin{split}
			\var(\widehat{w_i}|\mathbf{x}) & = \frac{2}{ML}\left(\left(w_i+\eta\right)^2+\left(\frac{L-2}{2}\right)\sum_{k=1}^{K}w_{k,i}^2\right),
		\end{split}
	\end{equation}
	where we have a sequence length of $L/2$ per codeword, and 
	\begin{equation*}
		\begin{split}
			&w_1 = \sum_{k=1}^K\underbrace{w_{k,1}}_{=(x_k)^+} ,\text{ }w_2=\sum_{k=1}^K\underbrace{w_{k,2}}_{=(-x_k)^+}.
		\end{split}
	\end{equation*}
	Given $\mathbf{x}$, the estimates $\widehat{w_1},\widehat{w_2},$ are mutually independent as they are aggregated separately. Even if the aggregation is sequential (as for mapping $A$) causing a statistical dependence between the channels for $w_1$ and $w_2$, a single device only contributes to one of $w_1$ or $w_2$. Then, assuming channels across devices are independent, $w_1$ and $w_2$ are always independent given $\mathbf{x}$. Importantly, $AA$ exploits the same number of independent channel instances as $A$ in both the sequential and separate aggregation cases. The independence results in
	\begin{equation}\label{condvarSAA}
		\begin{split}
			&\var(\widehat{x}|\mathbf{x})_{SC,AA} = \var\left(\widehat{w_1}-\widehat{w_2}|\mathbf{x}\right)=\sum_{i=1}^{2}\var\left(\widehat{w_i}|\mathbf{x}\right)\\
			&\overset{(\ref{eq:wcondvartox})}{=}\frac{2}{ML}\left(\left(x_++\eta\right)^2 + \left(x_- + \eta\right)^2+\left(\frac{L-2}{2}\right)\sum_{k=1}^{K}x_{k}^2\right).
		\end{split}
	\end{equation}
	
	\subsubsection{Instantaneous CSI for Unit Affine mapping}
	As for statistical CSI, consider general $a,b$. Insertion of (\ref{A_encoder}) and (\ref{A_decoder}) into (\ref{eq:varwgivenw}) for IC gives:
	\begin{equation}\label{condvarIA}
		\begin{split}
			&\var(\widehat{x}|\mathbf{x})_{IC,A} = \var(c\widehat{w} +d|\mathbf{x}) = c^2\var\left(\widehat{w}|\mathbf{W}\right)\\
			&= \frac{1}{a^2L}\Bigg(\left(ax+bK + \eta\right)^2 - \sum_{k=1}^K\left(ax_k+b\right)^2\Bigg)\\
			&\overset{(\ref{eq:simple_affine_mapping_coeff})}{=}\frac{1}{L}\left(\left(x+K + 2\eta\right)^2 - \sum_{k=1}^K\left(x_k+1\right)^2\right).
		\end{split}
	\end{equation}
	\subsubsection{Instantaneous CSI for Unit Augmented Affine mapping}
	As previously argued in Section \ref{statistical_CSI_AA_cond_var} for the case of statistical CSI with the Augmented Affine mapping, one gets
	\begin{equation}
		\begin{split}
			&\var(\widehat{w_i}|\mathbf{x}) = \frac{2}{L}\left(\left(w_i + \eta\right)^2 - \sum_{k=1}^Kw_{k,i}^2\right),
		\end{split}
	\end{equation}
	and
	\begin{equation}\label{condvarIAA}
		\begin{split}
			&\var(\widehat{x}|\mathbf{x})_{IC,AA} = \var\left(\widehat{w_1}-\widehat{w_2}|\mathbf{x}\right)=\sum_{i=1}^{2}\var\left(\widehat{w_i}|\mathbf{x}\right)\\
			&\overset{(\ref{simplemappings})}{=}\frac{2}{L}\left(\left(x_+ + \eta\right)^2 + \left(x_- + \eta\right)^2 - \sum_{k=1}^K{x_k}^2\right).
		\end{split}
	\end{equation}
\end{proof}

\section{Proof of Theorem \ref{theorem_total_variance}}\label{proof_theorem_total_variance}
\begin{proof}[Proof]
	First, recall the law of total variance (\ref{eq:total_variance}) and let $x_k\sim\text{U}(-1, 1)$, i.i.d. across $k$, with ${\mathbb{E}[x_k] = 0, \var(x) = K/3}$. Furthermore, $\eta$ is used throughout for notational ease, which can be replaced with the normalized $\eta_\text{AA}$ for AA as a final step ensuring equal energy expenditure.
	
	\subsection{Unit Affine Mapping}
	Given $\mathbb{E}\left[\widehat{\mathbf{w}}|\mathbf{W}\right]=\mathbf{w}$, for general $a,b$ which by (\ref{eq:A_constraints}) fully determines $c,d$ it follows that
	\begin{equation*}
			\mathbb{E}[\widehat{x}|\mathbf{x}] = c\mathbb{E}\left[\widehat{w}|\mathbf{W}\right] + d = cw + d = x + cKb - Kcb = x,
	\end{equation*}
	such that $\var(\mathbb{E}[\widehat{x}|\mathbf{x}])=\var(x)$. Furthermore, since $x_k\sim\text{U}(-1,1)$ w.l.o.g let $a>0$. Then since $w_k = ax_k + b\geq 0$ it follows that $b\geq a$ and $w_\text{max}=a+b$ such that $\eta = \frac{a+b}{P\beta_\text{min}}$. Then from (\ref{condvarSA}) we get
	\begin{equation}\label{varSA1}
		\begin{split}
			&\left(\var(\widehat{x})_{\text{SC,A}}-\frac{K}{3}\right)a^2ML\\
			& = \mathbb{E}\left[\left(\left(ax + Kb + \eta\right)^2+ (L-1)\sum_{k=1}^K\left(ax_k+b\right)^2\right)\right]\\
			& =\Bigg(\frac{Ka^2}{3} + \left(Kb+\frac{a+b}{P\beta_\text{min}}\right)^2 + (L-1)K\left(\frac{a^2}{3}+b^2\right)\Bigg).
		\end{split}
	\end{equation}
	Now, from (\ref{varSA1}) it is clear that $\var(\widehat{x})_{\text{SC,A}}$ decreases with decreasing $b$ such that $b=a$ is the optimal $b$. Then finally, with $a=1/2$ from (\ref{eq:simple_affine_mapping_coeff}) it follows that 	
	\begin{equation*}
		\begin{split}
			&\var(\widehat{x})_{\text{SC,A}}=\frac{1}{ML}\left(K^2 - K + 4K\eta + 4\eta^2 + \frac{4KL}{3}\right) + \frac{K}{3},\\
		\end{split}
	\end{equation*}
	showing the minimum $\var(\widehat{x})$ for the Affine mapping, achieved by the Unit Affine mapping. For the case of instantaneous CSI, recall (\ref{condvarIA}) and the arguments around $b\geq a$, $w_\text{max}=a+b$ from the case of statistical CSI. Then
	\begin{equation}\label{varIA}
		\begin{split}
			&\left(\var(\widehat{x})_{\text{IC,A}}-\frac{K}{3}\right)a^2L = \mathbb{E}\Bigg[\left(ax+bK + \eta\right)^2\\
			& - \sum_{k=1}^K\left(ax_k+b\right)^2\Bigg] = \left(b^2K(K-1) + 2bK\eta + \eta^2\right).\\
		\end{split}
	\end{equation}
	Again, similar to (\ref{varSA1}) it is clear that (\ref{varIA}) is minimized by setting $b=a$, giving $\eta=\frac{2a}{P|g_\text{min}|}$ and
	\begin{equation*}
		\begin{split}
			&\var(\widehat{x})_{\text{IC,A}}\overset{a\text{ from }(\ref{eq:simple_affine_mapping_coeff})}{=} \frac{1}{L}\left(K^2 - K + 4K\eta + 4\eta^2\right) + \frac{K}{3},
		\end{split}
	\end{equation*}
	which concludes the proof for the Affine mapping, where the minimum $\var(\widehat{x})$ has been derived and proven.
	\subsection{Unit Augmented Affine Mapping}
	As for the Unit Affine mapping, given  $\mathbb{E}\left[\widehat{\mathbf{w}}|\mathbf{W}\right]=\mathbf{w}$, it follows for the Unit Augmented Affine mapping from (\ref{eq:simple_augmented_affine_coeff}) that
	\begin{equation*}
			\mathbb{E}[\widehat{x}|\mathbf{x}] = \mathbb{E}\left[\widehat{w_1}|\mathbf{W}\right] - \mathbb{E}\left[\widehat{w_2}|\mathbf{W}\right] = w_1 - w_2 = x,
	\end{equation*}
	such that $\var(\mathbb{E}[\widehat{x}|\mathbf{x}])=\var(x)$. Now, let $\widetilde{x_k}\sim\text{U}(0,1)$, ${q_k\sim\text{Be}(1/2)}$, i.i.d. across $k$, with moments ${\mathbb{E}[\widetilde{x_k}]=1/2}$, ${\var(\widetilde{x_k})=1/12}$, ${\mathbb{E}[q_k] = 1/2}$, ${\var(q_k) = 1/4}$.  Equivalently we have $x_k=q_k\widetilde{x_k} -(1-q_k)\widetilde{x_k}\sim\text{U}(-1, 1)$, ${(x_k)^+ = q_k\widetilde{x_k}}$, ${(-x_k)^+=(1-q_k)\widetilde{x_k}}$. Then for statistical CSI recall (\ref{condvarSAA}), giving
	\begin{equation*}
		\begin{split}
		&\var(\widehat{x})_{\text{SC,AA}}-\frac{K}{3}\\
		&=\frac{2}{ML}\mathbb{E}\left[\left(x_++\eta\right)^2 + \left(x_- + \eta\right)^2+\left(\frac{L-2}{2}\right)\sum_{k=1}^{K}x_{k}^2\right]\\
		&=\frac{1}{ML}\left(\frac{K^2}{4} + \frac{5K}{12} + 2K\eta + 4\eta^2+\frac{K(L-2)}{3}\right).\\
		\end{split}
	\end{equation*}
	Finally, for instantaneous CSI recall (\ref{condvarIAA}), giving
	\begin{equation}\label{varIAA}
		\begin{split}
			&\var(\widehat{x})_{\text{IC,AA}} = \frac{2}{L}\mathbb{E}\left[2x_+^2 + 4\eta x_+ + 2\eta^2 - \sum_{k=1}^Kx_k^2\right]\\
			&+ \var(x)= \frac{1}{L}\left(\frac{K^2}{4} - \frac{K}{4}+ 2K\eta + 4\eta^2\right) + \frac{K}{3}.
		\end{split}
	\end{equation}
\end{proof}

\section{Proof of Theorem \ref{theorem:generalized_affine_mapping}}\label{proof_generalized_affine_mapping}
For the estimation of $\mathbf{w}$, excluding $K_{1}$ and $K_{N/2+1}$, the codeword sum estimator (\ref{theestimator}) with Theorem \ref{theorem_cond_var_w} directly yields
\begin{equation}\label{eq:extended_affine_cond_variances}
	\begin{split}
		&\var(\widehat{K_n}|\mathbf{W}) = \frac{1}{ML_b}\left(\left(K_n + \eta\right)^2 + (L_b-1)K_n\right),\\
		&\var(\widehat{w}_n|\mathbf{W}) = \frac{1}{ML_w}\left(\left(w_n + \eta\right)^2 + (L_w-1) \sum_{k=1}^Kw_{k,n}^2\right).
	\end{split}
\end{equation} 
Since these codewords are estimated on independent resources, their estimates are independent given $\mathbf{W}$ (equivalent to given $\mathbf{x}$), which means the variance of the sum $\widehat{x}=\mathcal{D}(\mathbf{\widehat{w}})$ can be separated into sum of variances:
\begin{equation}\label{eq:sum_of_variances_extended_affine}
	\begin{split}
		&\var(\widehat{x}|\mathbf{x})\\
		& =\frac{4}{N^2}\bigg(\sum_{n=1}^{N}\var\left(\widehat{w_{n}}|\mathbf{x}\right)+ \sum_{n=1}^{N/2}\var(\widehat{K_n}|\mathbf{x})(n-1)^2\\
		& + \sum_{n'=N/2+1}^{N}\var(\widehat{K_n'}|\mathbf{x})\left(n'-\frac{N}{2}-1\right)^2\bigg).
	\end{split}
\end{equation}
To compute the total variance (\ref{eq:total_variance}), consider $x_k\sim\text{U}(-1,1)$ and that $\var(\mathbb{E}[\widehat{x}|\mathbf{x}])=\var(x)=K/3$. Then $w_{k,n}$ is a mixture of $0$ and $\text{U}(0,1)$ mixed by $b_{k,n}$: $\mathbb{E}[w_{k,n}] = \frac{1}{2N}$, $\mathbb{E}[w_{k,n}^2]=\frac{1}{3N}$, $\mathbb{E}[K_n] = K/N$. We get
\begin{equation}
	\begin{split}
		&\mathbb{E}[(w_{n}+\eta)^2] = \frac{K}{3N} + \frac{K(K-1)}{4N^2} + \frac{K\eta}{N} + \eta^2,\\
		&\mathbb{E}[(K_n + \eta)^2] = \frac{K}{N} + \frac{K(K-1)}{N^2} + \frac{2\eta K}{N} + \eta^2,
	\end{split}
\end{equation}
which inserted into (\ref{eq:extended_affine_cond_variances}) and then (\ref{eq:sum_of_variances_extended_affine}) gives $\mathbb{E}[\var(\widehat{x}|\mathbf{x})]$. Finally,  $\mathbb{E}[\var(\widehat{x}|\mathbf{x})]+K/3$ and $\eta$ replaced with $\eta_N$, gives Theorem \ref{theorem:generalized_affine_mapping}.

\section{Proof of Proposition \ref{prop:fair_transmission}}\label{proof_fair_transmission}
\begin{proof}
	For $x_k\in\text{U}(-1,1)$ the expected power across all $L$ transmitted symbols, for the Unit Affine mapping is
	\begin{equation*}
		\begin{split}
			&\mathbb{E}\left[\sum_{l=1}^{L}|t_{k,l}|^2\right]=L\mathbb{E}\left[w_k\rho_k\right] = L\rho_k\mathbb{E}\left[\frac{x+1}{2}\right] = \rho_k\frac{L}{2},
		\end{split}
	\end{equation*}
	and for the Extended Affine mapping
	\begin{equation*}
		\begin{split}
			&\mathbb{E}\left[\sum_{l=1}^{L}|t_{k,l}|^2\right]=\sum_{n=1}^{N}\mathbb{E}\left[\text{I}(x_k\in\mathcal{X}_n)\left(L_ww_{k,n}\rho_k +\dots\right.\right.\\
			&\left.\left.L_b\rho_k\text{I}\left(x_k\notin\left\{1,\frac{N}{2}+1\right\}\right)\right)\right]\\
			&=\frac{1}{N}\left[(N-2)\left(\frac{L_w\rho_k}{2} + L_b\rho_k\right) + 2\frac{L_w\rho_k}{2}\right]\\
			&=\rho_k\left(\frac{L+L_b(N-2)}{2N}\right).
		\end{split}
	\end{equation*}
	To normalize for energy with the Unit Affine as reference, let $\rho_{k,N}$ be the power control coefficient for the Extended Affine mapping with $N$ segments, and $\eta_N$ the corresponding receive coefficient. The result on normalization is derived by equating the two sums above. Finally, $\eta_N<\eta$ follows from $NL > L + L_b(N-2)\Leftrightarrow L(N-1)>L_b(N-2)$ since $L=L_wN + L_b(N-2)$. 
\end{proof}

\section{Federated Learning with Unbiased Aggregation}\label{sec:FL_mse}
In FL, $K$ devices collaborate to solve a learning task coordinated by a central server. The goal is to find the optimal model
\begin{equation}\label{eq:FL_objective}
	\boldsymbol{\theta}^* = \underset{\boldsymbol{\theta}}{\text{ arg min }}F(\boldsymbol{\theta}) \equiv \frac{1}{K}\sum_{k=1}^{K}f_k(\boldsymbol{\theta}), 
\end{equation}
where $\boldsymbol{\theta}\in\mathbb{R}^D$ are \textit{global parameters} and $f_k(\cdot)$ \textit{local objectives} which can have heterogeneous local optima $\boldsymbol{\theta}_k^*\neq\boldsymbol{\theta}_{k'}^*\neq\boldsymbol{\theta}^*$. For example, the local objectives $f_k(\cdot)$ can be $D$-dimensional neural networks with equal architecture and loss function, but with different \textit{local data} at each device. In FL, (\ref{eq:FL_objective}) is solved through iterations of stochastic gradient descent (SGD). First, in iteration $t\in\mathbb{N}$, all devices compute their local (stochastic) gradient $\nabla f_k(\boldsymbol{\theta}^t)\in\mathbb{R}^D$, then transmit this to the central server which updates $\boldsymbol{\theta}^t$ as follows
\begin{equation}\label{eq:FL_SGD_STEP}
	\boldsymbol{\theta}^{t+1}=\boldsymbol{\theta}^{t} - \gamma\frac{1}{K}\sum_{k=1}^{K}\nabla f_k(\boldsymbol{\theta}^t),
\end{equation}
for some step-size $\gamma$. Finally, $\boldsymbol{\theta}^{t+1}$ is broadcast to all devices, and the FL loop continues \cite{yang2020federated}. 
\subsection{The Case of Unbiased Aggregation for Federated Learning}
If $f_k(\cdot)$ are strongly convex and smooth and $\gamma$ is sufficiently small, inserting an unbiased,
finite-variance estimator of $\nabla f_k(\cdot)$ into (\ref{eq:FL_SGD_STEP}) guarantees convergence of $\boldsymbol{\theta}^t$ to an $\mathcal{O}(\gamma)$-neighborhood of the global optimum $\boldsymbol{\theta}^*$, in the mean-square sense; see, e.g., \cite{larsson2025unified} or \cite{gower2019sgd}. In contrast, with biased gradient estimates, convergence is generally not guaranteed. For example, in an extremely biased setting where a single out of $K$ devices  is exclusively weighted at the receiver, the global model will converge to the local optimum of that single device. In the next section, we demonstrate from first principles, under the above-specified circumstances, how gradient unbiasedness naturally enters as a condition for convergence of FL. Analyses of the impact of gradient bias, under more general conditions, can be found for example in \cite{abrar2024biased} and \cite{demidovich2023guide}.
\subsection{Convergence in Mean-Square With Unbiased Aggregation}
Let each local objective $f_k(\cdot)$ be differentiable, $L$-smooth and $\mu$-strongly convex. Then the global optimal point of $F(\cdot)$,  $\boldsymbol{\theta}^*$, is unique. 
Suppose  $\mathbf{g}^t$ is an unbiased, finite-variance estimate of $\nabla F(\cdot)$ such that $\mathbf{g}^t = \nabla F\left(\boldsymbol{\theta}^t\right) + \mathbf{e}^t$, where
\begin{equation}\label{eq:noise_stats}
	\begin{split}
		&\mathbb{E}\left[\mathbf{e}^t|\boldsymbol{\theta}^t\right]=0,\text{ } \mathbb{E}\left[\big|\big|\mathbf{e}^t\big|\big|^2\big|\boldsymbol{\theta}^t\right]\leq V,
	\end{split}
\end{equation}
for some constant $V$. Then the FL SGD step is
\begin{equation}
	\boldsymbol{\theta}^{t+1} = \boldsymbol{\theta}^{t} -  \gamma\nabla F\left(\boldsymbol{\theta}^t\right) - \gamma\mathbf{e}^t, 
\end{equation}
where for sufficiently small step-size $\gamma$, the mean-square errors can be upper bounded as follows:

\begin{equation}
	\begin{split}
		&\mathbb{E}\left[\big|\big|\boldsymbol{\theta}^{t+1} - \boldsymbol{\theta}^*\big|\big|^2\right]\\
		& = \mathbb{E}\left[\mathbb{E}\left[\big|\big|\boldsymbol{\theta}^{t} - \boldsymbol{\theta}^* -  \gamma\nabla F\left(\boldsymbol{\theta}^{t}\right) -\gamma \mathbf{e}^t\big|\big|^2\bigg|\boldsymbol{\theta}^t\right]\right]\\
		& \overset{(*)}{=} \mathbb{E}\left[\mathbb{E}\left[\big|\big|\left(\boldsymbol{\theta}^{t} - \boldsymbol{\theta}^*\right) -  \gamma \mathbf{A}\left(\boldsymbol{\theta}^{t} - \boldsymbol{\theta}^*\right) -\gamma \mathbf{e}^t\big|\big|^2\bigg|\boldsymbol{\theta}^t\right]\right]\\
		& \overset{(**)}{=}\mathbb{E}\left[\big|\big|\left(\mathbf{I} - \gamma \mathbf{A}\right)\left(\boldsymbol{\theta}^t - \boldsymbol{\theta}^*\right)\big|\big|^2\right] + \gamma^2\mathbb{E}\left[\big|\big|\mathbf{e}^t\big|\big|^2\big|\boldsymbol{\theta}^t\right]\\
		&\leq (1-\gamma\mu)^2\mathbb{E}\left[\big|\big|\boldsymbol{\theta}^{t} - \boldsymbol{\theta}^*\big|\big|^2\right] + \gamma^2V\leq\dots\\
		&\leq(1-\gamma\mu)^{2t}\mathbb{E}\left[\big|\big|\boldsymbol{\theta}^{1} - \boldsymbol{\theta}^*\big|\big|^2\right] + \gamma\frac{V}{\mu}\left(\frac{1-(1-\gamma\mu)^{2t}}{2-\gamma\mu}\right),
	\end{split}
\end{equation}
converging to $\mathcal{O}(\gamma)$ when $t\rightarrow \infty$. Here $(*)$ follows from the \textit{Mean Hessian Theorem} \cite{larsson2025unified}, which states that with $f_k(\cdot)$ as herein, for any $\mathbf{x}, \mathbf{y}\in\mathbb{R}^D \text{ one can find }\text{ }\mathbf{A}\in\mathbb{R}^{D\times D}, \text{ s.t. }L\mathbf{I}\succeq\mathbf{A}\succeq\mu\mathbf{I}\text{ and }\nabla F(\mathbf{y}) - \nabla F(\mathbf{x}) = \mathbf{A}(\mathbf{y} - \mathbf{x})$. Furthermore, $(**)$ follows from (\ref{eq:noise_stats}) and only holds if the estimates are unbiased ($\mathbf{e}$ is zero-mean). The first inequality exploits that the norm is sub-multiplicative, and that the spectral norm of $\mathbf{I}-\gamma\mathbf{A}$ is at most $1-\gamma\mu$ for sufficiently small $\gamma$. The second inequality follows from recursion. 

\section{Discussion on the complexity of NC-OAC}\label{seq:complexity}
First consider the complexity of channel estimation overhead given by Table \ref{tab:complexity}. Assume a channel coherence time $\tau> 2K$ of the instantaneous CSI $g_k$, furthermore, that the large-scale fading $\beta_k$ is constant over $T$ coherence periods ($T\times\tau$ time-steps). Estimation of $g_k, \beta_k, \forall k$ require $K$ uplink pilots followed by feedback to each device. The table below shows the channel estimation overhead per aggregated codeword over a $T\times\tau$ time horizon when $M=L=1$:

\begin{table}[!t]
	\begin{center}
		\renewcommand{\arraystretch}{1.5} 
		\setlength{\tabcolsep}{10pt}      
		\caption{Comparison of overhead complexity.}
		\label{tab:complexity}
		\begin{tabular}{ | c |c | } 
			\hline
			OAC type & Average overhead per codeword \\ 
			\hline
			Coherent & $\frac{2K}{2(\tau-2K)}$ \\ 
			\hline
			Non-coherent IC & $\frac{2K}{\tau-2K}$ \\ 
			\hline
			Non-coherent SC & $\frac{2K}{\tau T-2K}$ \\ 
			\hline
		\end{tabular}
	\end{center}
\end{table}
The overhead for both coherent OAC, and NC-OAC with IC is higher than for NC-OAC with SC. However, NC-OAC with SC needs $L > 1$ to reach a comparable estimation variance. In terms of computational complexity at the AP side, the OAC schemes require $\mathcal{O}(ML)$ operations, which is better than the $\mathcal{O}(MLK)$ without OAC.

	\bibliographystyle{IEEEtran}
	
	\bibliography{references.bib}
		
	\begin{IEEEbiography}[{\includegraphics[width=1in,height=1.25in,clip,keepaspectratio]{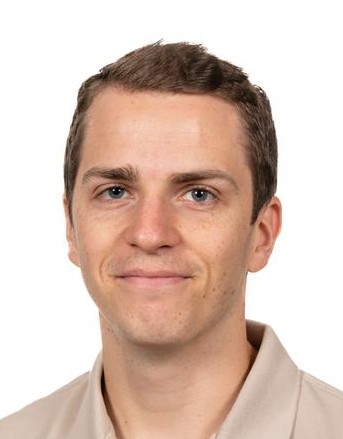}}]{Martin Dahl}
		received the B.Sc. and M.Sc. degrees in Electrical Engineering from Linköping University, Sweden, in 2021 and 2023, respectively, where he is currently pursuing the Ph.D. degree in Electrical Engineering. During his M.Sc. studies, he spent two semesters at EPFL, Switzerland. His M.Sc. thesis received multiple recognitions and he has completed internships at Ericsson Research and Goldman Sachs. Martin is currently affiliated with the Wallenberg AI, Autonomous Systems and Software Program (WASP).
	\end{IEEEbiography}
	\begin{IEEEbiography}[{\includegraphics[width=1in,height=1.25in,clip,keepaspectratio]{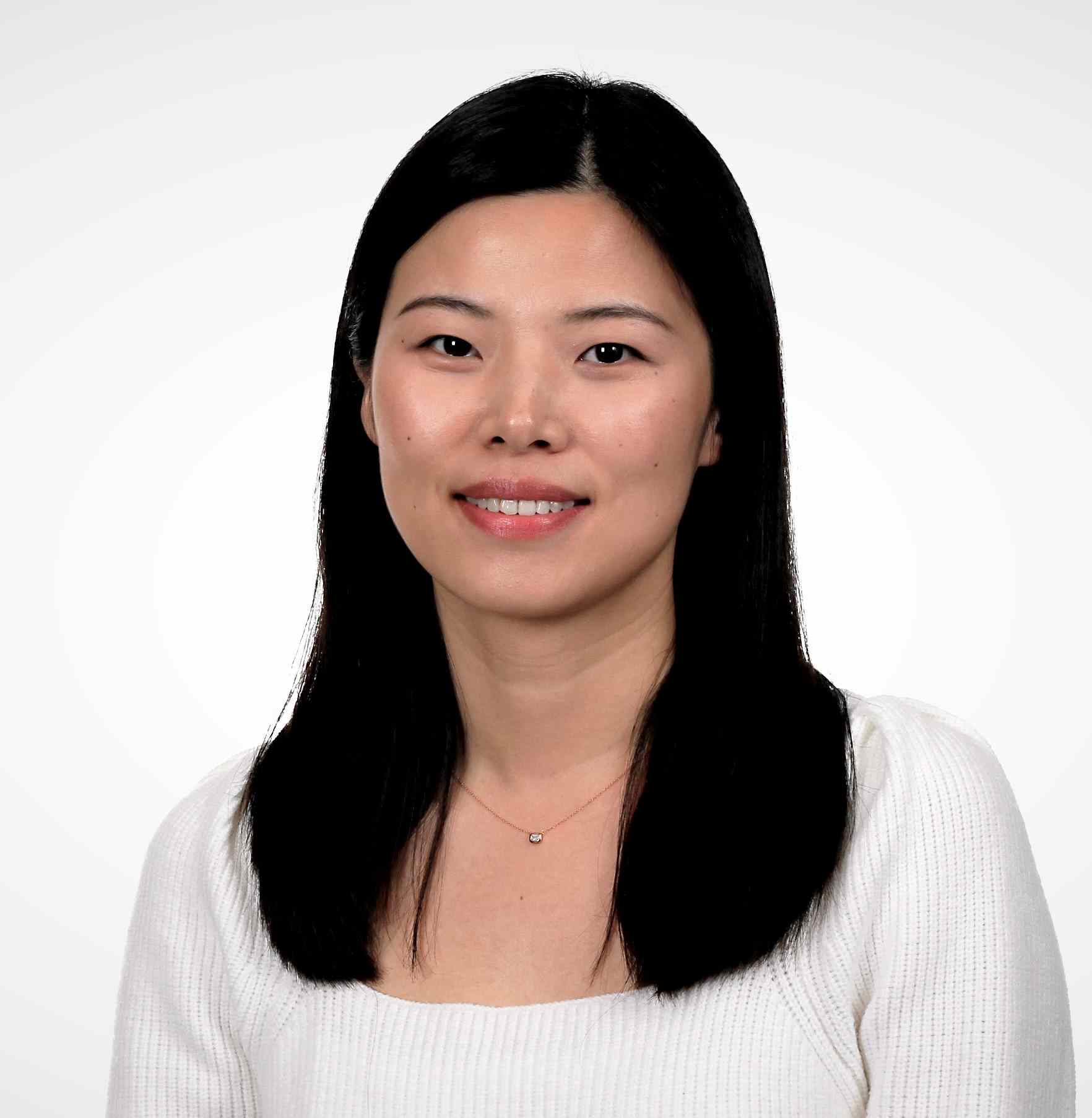}}]{Zheng Chen}
		is an Associate Professor with the Department of Electrical Engineering at Linköping University, Sweden. She received her M.Sc. and Ph.D. degrees from CentraleSupélec, Université Paris-Saclay, France, in 2013 and 2017, respectively. Her research focuses on distributed and cooperative computing, optimization, and learning for networked intelligent systems under communication constraints.
		
		She was a recipient of the 2020 IEEE Communications Society Young Author Best Paper Award. She has served as the co-chair of several workshops and special sessions at IEEE GLOBECOM, ICASSP, SPAWC, Asilomar, and as the technical program chair of the 2022 IEEE SPS-EURASIP summer school on “Defining 6G: Theory, Applications and Enabling Technologies”. She is currently an Associate Editor of the IEEE Transactions on Wireless Communications, IEEE Transactions on Communications, and IEEE Transactions on Green Communications and Networking. She is also serving as the Lead Guest Editor for IEEE JSAC special issue on “Distributed Optimization, Learning, and Inference over Communication-Constrained Networks”
	\end{IEEEbiography}
	\begin{IEEEbiography}[{\includegraphics[width=1in,height=1.25in,clip,keepaspectratio]{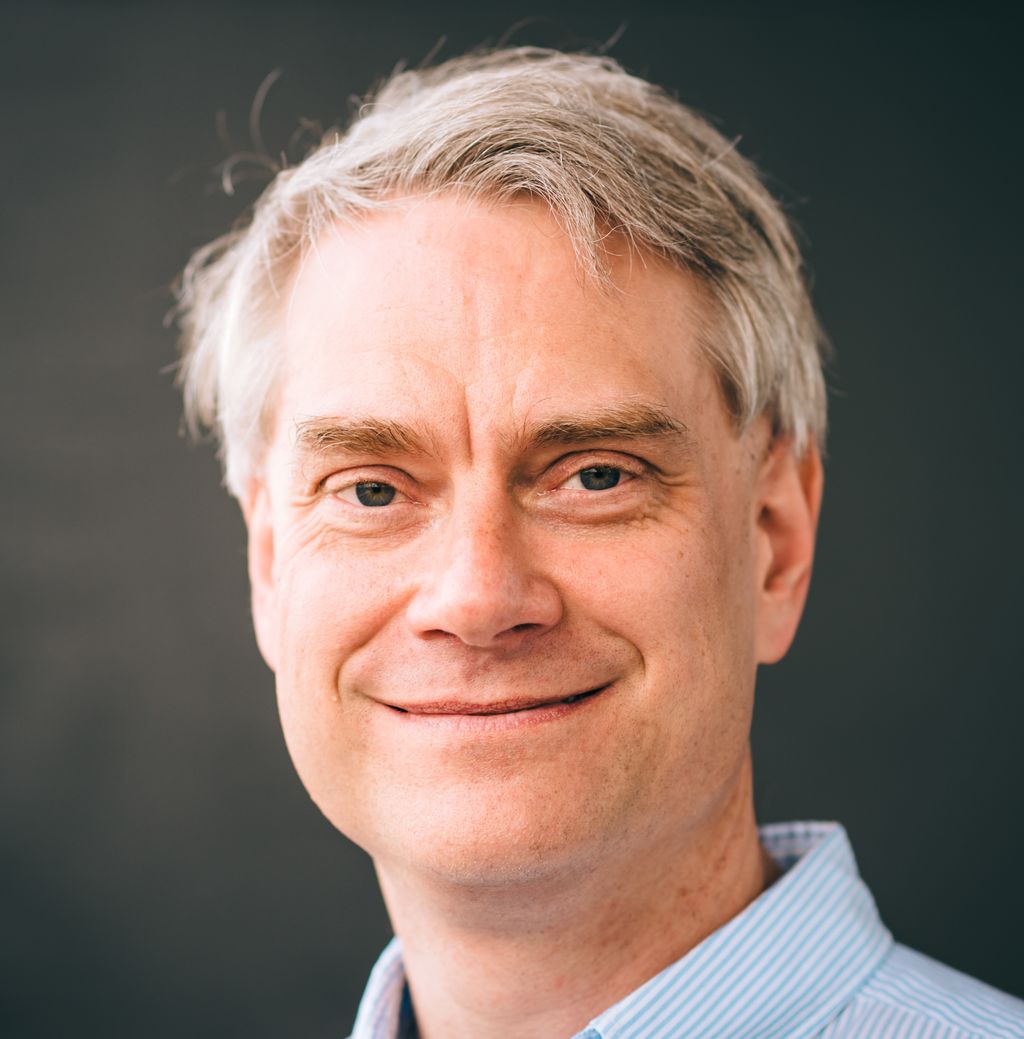}}]{Erik G. Larsson}
	received the Ph.D. degree from Uppsala University,
	Uppsala, Sweden, in 2002.  He is currently Professor of Communication
	Systems at Link\"oping University (LiU) in Link\"oping, Sweden. He was
	with the KTH Royal Institute of Technology in Stockholm, Sweden, the
	George Washington University, USA, the University of Florida, USA, and
	Ericsson Research, Sweden.  His main professional interests are within
	wireless communications, signal processing, network science, and decentralized machine learning. He
	co-authored \emph{Space-Time Block Coding for  Wireless Communications} (Cambridge University Press, 2003) 
	and \emph{Fundamentals of Massive MIMO} (Cambridge University Press, 2016). 
	
	He served as  chair  of the IEEE Signal Processing Society SPCOM technical committee (2015--2016), 
	chair of  the \emph{IEEE Wireless  Communications Letters} steering committee (2014--2015), 
	member of the  \emph{IEEE Transactions on Wireless Communications}    steering committee (2019-2022),
	General and Technical Chair of the Asilomar SSC conference (2015, 2012), 
	technical co-chair of the IEEE Communication Theory Workshop (2019), 
	and   member of the  IEEE Signal Processing Society Awards Board (2017--2019).
	He was Associate Editor for, among others, the \emph{IEEE Transactions on Communications} (2010-2014), 
	the \emph{IEEE Transactions on Signal Processing} (2006-2010),
	and  the \emph{IEEE Signal  Processing Magazine} (2018-2022).
	
	He received the IEEE Signal Processing Magazine Best Column Award
	twice, in 2012 and 2014, the IEEE ComSoc Stephen O. Rice Prize in
	Communications Theory in 2015, the IEEE ComSoc Leonard G. Abraham
	Prize in 2017, the IEEE ComSoc Best Tutorial Paper Award in 2018, 
	the IEEE ComSoc Fred W. Ellersick Prize in 2019, and the
	IEEE SPS Donald G. Fink Overview Paper Award in 2023.
	
	He is a Fellow of the IEEE, a Fellow of EURASIP, a member of the  Royal Swedish  Academy of Sciences (KVA), 
	a member of the  Royal Swedish Academy of Engineering Sciences (IVA), and Highly Cited according to ISI Web of Science.
	
	\end{IEEEbiography}

\end{document}